\newcommand{\OurTitle}{Econometrics for Learning Agents}
\begin{document}

% Page heads
\markboth{Nekipelov et al.}{\OurTitle}

% Title portion
\title{\OurTitle}
\author{DENIS NEKIPELOV
\affil{University of Virginia, {\tt denis@virginia.edu}}
VASILIS SYRGKANIS
\affil{Microsoft Research, {\tt vasy@microsoft.com}}
EVA TARDOS
\affil{Cornell University, {\tt eva.tardos@cornell.edu} }
}

\begin{abstract}
The main goal of this paper is to develop a theory of inference of player valuations from  observed data in the generalized second price auction without relying on the Nash equilibrium assumption. Existing work in Economics on inferring agent values from data relies on the assumption that all participant strategies are best responses of the observed play of other players, i.e. they constitute a Nash equilibrium. In this paper, we show how to perform inference relying on a weaker assumption instead: assuming that players are using some form of no-regret learning. Learning outcomes emerged in recent years as an attractive alternative to Nash equilibrium in analyzing game outcomes, modeling players who haven't reached a stable equilibrium, but rather use algorithmic learning, aiming to learn the best way to play from previous observations. In this paper we show how to infer values of players who use algorithmic learning strategies. Such inference is an important first step before we move to testing any learning theoretic behavioral model on auction data. We apply our techniques to a dataset from Microsoft's sponsored search ad auction system.
\end{abstract}

\category{J.4}{Social and Behavioral Sciences}{Economics}
\category{K.4.4}{Computers and Society}{Electronic Commerce}
%\category{F.2.2}{Analysis of Algorithms and Problem Complexity}{Non-numerical Algorithms and Problems}
\category{F.1.2}{Modes of Computation}{Online computation}

\terms{Theory, Economics, Econometrics, Experimental}

\keywords{Econometrics, no-regret learning, sponsored search, value inference}

%\acmformat{\OurTitle.}
% At a minimum you need to supply the author names, year and a title.
% IMPORTANT:
% Full first names whenever they are known, surname last, followed by a period.
% In the case of two authors, 'and' is placed between them.
% In the case of three or more authors, the serial comma is used, that is, all author names
% except the last one but including the penultimate author's name are followed by a comma,
% and then 'and' is placed before the final author's name.
% If only first and middle initials are known, then each initial
% is followed by a period and they are separated by a space.
% The remaining information (journal title, volume, article number, date, etc.) is 'auto-generated'.

\begin{bottomstuff}

Author's addresses: denis@virginia.edu, vasilis@cs.cornell.edu, eva.tardos@cornell.edu.
\'Eva Tardos was supported in part by NSF grants CCF-0910940; and CCF-1215994, ONR grant N00014-08-1-0031, a Yahoo! Research Alliance Grant, and a Google Research Grant.
\'Denis Nekipelov was supported in part by NSF grants CCF-1449239, CCF-0910940 and a Google Research Grant.
\end{bottomstuff}

\maketitle

\section{Introduction}\label{sec:intro}
% !TEX root=main.tex

\vsedit{The standard approach in the econometric analysis of strategic interactions, starts with assuming that the participating agents, conditional on their private parameters, such as their valuation in an auction, can fully optimize their utility given their opponents actions and that the system has arrived at a stable state of mutual such best-responses, aka a Nash equilibrium.}
%Classic work on economic analysis of the interactions of strategic agents starts with players that have valuations for outcomes, such as items or sets of items they may win in an auction, and analyzes equilibria of resulting game, where players optimize their strategies to improve their outcome.

In recent years, learning outcomes have emerged as an important alternative to Nash equilibria. This solution concept is especially compelling in online environments, such as Internet auctions, \vsedit{as many such environments are best thought of as repeated strategic interactions in a dynamic and complex environment, where participants need to constantly update their strategies to learn how to play optimally. %games, with the economic value of a single interaction minimal, but the huge overall value emerges due to the volume of interaction.
} The strategies of agents in such environments evolve over time, as they learn to improve their strategies, and react to changes in the environment. \vsedit{Such learning behavior is reinforced even more by the increasing use of sophisticated algorithmic bidding tools by most high revenue/volume advertisers.}
With auctions having emerged as the main source of revenue on the Internet, there are multitudes of interesting data sets for strategic agent behavior in repeated auctions.

To be able to use the data on \vsedit{agent behavior} to empirically test the prediction of \etedit{the theory based on learning agents}, one first needs to infer \vsedit{the agents' types}, or valuations of the items, from the observed behavior \etedit{without relying on the stable state best-response assumption}. The idea behind inference \etedit{based on the stable best response assumption is straightforward: the  distribution of actions of players is observed in the data. If we assume that each player best responds to the distribution of opponents' actions, this best response can be recovered from the data. The best response function effectively describes the preferences of the players, and can typically be inverted to recover each player's private type. This is the idea used in
all existing work in economics on this inference problem,
including %Athey and Nekipelov
\cite{athey:nekipelov:2010}, \cite{Nekipelov:World:Congress}, and %Jiang and Leyton-Brown
\cite{JiangLeyton-Brown}.
}
%, relies on the assumption that all participant strategies are best responses to the observed play of other player's (i.e. they constitute a Nash equilibrium).

There are several caveats in this approach.
First of all, the assumption that an equilibrium has been reached
is unrealistic in many practical cases, either because equilibrium best
response functions are hard to compute or the amount of information
needed to be exchanged between the players to reach
an equilibrium outcome is unrealistically large.
Second, the equilibrium is rarely unique especially in dynamic settings.
In that case the simple ``inversion" of the best responses
to obtain the underlying valuations becomes a complicated
computational problem because the set of equilibria
frequently has a complicated topological structure.
The typical approach to avoid this complication
is to assume some equilibrium refinement, e.g.
the {\it Markov Perfect Equilibrium} in the case of
dynamic games.
\par
In spite of a common understanding in the Economics
community that many practical environments cannot be modeled
using the traditional equilibrium
framework (and that the assumption of a particular equilibrium refinement
holds), this point has been
overshadowed by  the simplicity of using the equilibrium based models.
\par
In this paper we consider a dynamic game \vsedit{where players learn how to play over time.} %can learn how to play in a game while playing.
%Our primary application focuses
\vsedit{We focus on a model of {\it sponsored search auctions} where
bidders compete} for advertising spaces alongside the organic search results.
This environment is inherently dynamic where the auction is run for every individual
consumer search and thus each bidder with an active bid participates in a sequence
of auctions.
\par
Learning models agents that are new to the game, or participants as they adjust to a constantly evolving environment, where they have to constantly learn what may be best play. In these cases, strict best response to the observed past may not be a good strategy, it is both computationally expensive, and can be far from optimal in a dynamically changing environment.  Instead of strict best response, players may want to use a learning algorithm to choose their strategies.

No-regret learning \vsedit{%outcome
has emerged as an attractive} alternative to Nash equilibrium. The notion \vsedit{of a no-regret learning outcome,} generalizes Nash equilibrium by requiring the no-regret property of a Nash equilibrium, \vsedit{ in an approximate sense, but more importantly} without the assumption that player \vsedit{strategies are stable and independent.}  \vsedit{When considering %Learning outcome is considering
a sequence of repeated plays, having no-regret means} that the total value for the player over a sequence of plays is not much worse than the value he/she would have obtained had he played the best single strategy throughout the time, where the best possible single strategy is determined with hindsight based on the environment and the actions of the other agents.
There are many well-known, natural no-regret learning algorithms, such as the weighted majority algorithm~\cite{Arora}, \cite{LittWarm94} (also known as Hedge~\cite{fre}) and regret matching \cite{HartMasColell} just to name a few simple ones. We propose \vsedit{%to develop
a theory of inference of agent valuations} just based on the assumption that the agent's learning strategies are smart enough that they have  minimal regret, without making any assumptions on the particular no-regret algorithms they employ.
%\vsdelete{This form of learning outcome is also known as coarse correlated equilibrium, since correlation between players can naturally arise in a learning outcome, as each player is trying to learn from the shared history of play.}
%\vscomment{We never talk about CCE and also CCE might give the wrong impression that we have a stable CCE distribution. I think we can avoid it.}

There is a growing body of results in the algorithmic game theory literature \vsedit{
characterizing properties of no-regret learning outcomes in games, such as
approximate efficiency with respect to the welfare optimal outcome (see e.g. \cite{Roughgarden2012,Syrgkanis2013}).} % factors for revenue and welfare in various cases.} % where the game is not at a stable equilibrium, assuming players use strategies that guarantee the above no-regret property in place of the stronger equilibrium best response assumption.
For instance, %For example, %Caragiannis et al
\cite{C++JET13} consider the generalized second price auction in this framework, the auction also considered in our paper and showed that the average welfare of any no-regret learning outcome is always at least 30\% of the optimal welfare.
\vsedit{To be able to apply such theoretical results on real data and to quantify the true inefficiency of GSP in the real world under learning behavior, we first need a way to infer player valuations without relying on the equilibrium assumption.}

\paragraph{Our contribution}
The main goal of this paper is to develop \vsedit{a theory of value inference from observed data of repeated strategic interactions
%understanding how to infer the players valuations from the observed data
without relying on a Nash equilibrium assumption.} Rather than relying on the stability of outcomes, we make the weaker assumption that players are using some form of no-regret learning. In a stable Nash equilibrium outcome, players choose strategies independently, and their choices are best response to the environment and the strategies of other players. Choosing a best response implies that the players have no regret for alternate strategy options. We make the analogous assumption, that in a sequence of play, players have \vsedit{small %no
regret} for any fixed strategy.

Our results do not rely on the assumption that the participating players have correct beliefs
regarding the future actions of their opponents or that they even can correctly
compute expectations over the future values of state variables. Our only assumption is that the players understand the
rules of the game. This is significantly different from most of the current results in Industrial Organization
where estimation of dynamic games requires that the players have correct beliefs regarding
the actions of their opponents. This \etndelete{result} is especially important to the analysis of
bidder behavior in sponsored search auctions, which are \etnedit{the core} application %of discussion
of this paper. \etedit{The sponsored search marketplace is highly dynamic and volatile
where the popularity of different search terms is changing,
and the auction platform continuously runs experiments. In this environment advertisers continuously create new ads (corresponding to new bids), remove under-performing ads, learning what is the best way to bid while participating in the game.} In this setting the assumption
of \etndelete{forward-looking} players who have correct beliefs regarding their opponents and whose
bids constitute and equilibrium
may not be realistic.

When inferring player values from data, one needs to always accommodate small errors. In the context of players who employ learning strategies, a small error $\epsilon>0$ would mean that the player can have up to $\epsilon$ regret, i.e., the utility of the player from the strategies used needs to be at most $\epsilon$ worse than any fixed strategy with hindsight. Indeed, the guarantees provided by the standard learning algorithms is that the total utility for the player is not much worse than any single strategy with hindsight, with this error parameter decreasing over time, as the player spends more time learning. In aiming to infer the player's value from the observed data, we define the {\em rationalizable set} $\NR$, consisting of the set of values and error parameters $(v,\epsilon)$ such that with value $v$ the sequence of bid by the player would have at most $\epsilon$ regret. We show that $\NR$ is a closed convex set. Clearly, allowing for a larger error $\epsilon$ would result in a larger set of possible values $v$. The most reasonable error parameter $\epsilon$ for a player depends on the quality of his/her learning algorithm. We think of a rationalizable value $v$ for a player as a value that is rationalizable with a small enough $\epsilon$.

Our main result provides a characterization of the rationalizable set $\NR$ for the dynamic sponsored search auction game. We also provide a simple approach to compute this set. We demonstrate that the evaluation of $\NR$ is equivalent to the evaluation of a one-dimensional function which, in turn, can be computed from the auction data directly
by \etedit{sampling.} %using simulations.
This result also allows us to show how much data is needed to correctly estimate the
rationalizable set for the dynamic sponsored search auction. We show that when $N$ auction samples
are observed in the data, the Hausdoff distance between the estimated and the true
sets $\NR$ is $O((N^{-1}\log\,N)^{\gamma/(2\gamma+1)})$, where
$\gamma$ is the sum of the number of derivatives of the allocation and pricing functions
and the degree of H\"{o}lder continuity of the oldest derivative.
In particular, when the allocation and pricing functions are only Lipschitz-continuous,
then the total number of derivatives is zero and the constant of H\"{o}lder-continuity
is $1$, leading to the rate $O((N^{-1}\log\,N)^{1/3})$.

This favorably compares our result to the result in \cite{athey:nekipelov:2010}
where under the assumption of {\it static Nash equilibrium} in the sponsored
search auction, the valuations of bidders can be evaluated with error
margin of order $O(N^{-1/3})$ when the empirical pricing and allocation
function are Lipschitz-continuous in the bid. This means that our approach,
which does not rely on the equilibrium properties, provides \etedit{convergence rate} %an error
which is only a $(\log\,N)^{1/3}$ factor away.
%To our knowledge, there
%are no econometric results for dynamic auctions where bidders
%use randomized strategies. However, we anticipate that the gap
%between our error rate and the rate that will be attained for the estimation
%of values from the stochastic equilibrium will be even smaller.

In Section \ref{sec:data} we test our methods on a Microsoft \etedit{Sponsored Search Auction} data set.
We show that our methods can be used to infer values on this real data, and
study the empirical consequences of our value estimation. We find that typical advertisers bid a significantly
shaded version of their value, shading it commonly by as much as 40\%. We also observe that each advertiser's account consists of a constant fraction of listings (e.g. bided keywords and ad pairs) that have tiny error and hence seem to satisfy the best-response assumption, whilst the remaining large fraction has an error which is spread out far from zero, thereby implying more that bidding on these listings is still in a learning transient phase.
Further, we find that, \vsedit{among listings that appear to be in the learning phase,
the relative error (the error in regret relative to the player's value) is slightly} positively correlated with the amount of shading. A higher error is suggestive of an exploration phase of learning, and is consistent with attempting larger shading of the bidder's value, while advertisers with smaller relative regret appear to shade their value less.

\paragraph{Further Related Work}
% !TEX root=main.tex
There is a rich literature in Economics that is devoted
to inference in auctions based on the equilibrium assumptions.
\cite{guerre2000} studies the estimation of values
in static first-price auctions, with the extension to the 
cases of risk-averse bidders in
\cite{guerre2009} and \cite{campo2011}. The
equilibrium settings also allow inference in 
the dynamic settings where the players participate
in a sequence of auctions, such as in \cite{jofre2003}.
A survey of approaches to inference is surveyed 
in \cite{athey2007}. These approaches have been
applied to the GSP and, more generally, to the 
sponsored search auctions in the empirical settings in 
\cite{Varian07} and \cite{athey:nekipelov:2010}.

The deviation from the ``equilibrium best response"
paradigm is much less common in any empirical studies.
A notable exception is \cite{haile2003} where the 
authors use two assumptions to bound the distribution
of values in the first-price auction. The first assumption 
is that the bid should not exceed the value. That allows
to bound the order statistics of the value distribution from 
below by the corresponding order statistics of the observed 
bid distribution. The second assumption is that there is
a minimum bid increment and if a bidder was outbid
by another bidder then her value is below the bid that made her
drop out by at least the bid increment.  That allows to provide
the bound from above. The issue with these bounds
is that in many practical instances they are very large and they 
do not directly translate to the dynamic settings. The computation
of the set of values compatible with the model may be compicated
even in the equilibrium settings. For instance, in \cite{aradillas2013}
the authors consider estimation of the distribution of values
in the ascending auction when the distribution of values may be
correlated. It turns out that even if we assume that the bidders
have perfect beliefs in the static model with correlated values,
the inference problem becomes computationally challenging.

\section{No-Regret Learning and Set Inference}\label{sec:general}
% !TEX root=main.tex

Consider a game $G$ \etnedit{with} %among 
a set $N$ of $n$ players. Each player $i$ has \etnedit{a} %some 
strategy space $B_i$. The utility of a player depends on the strategy profile $\vec{b}\in B_1\times \ldots\times B_n$, on a set of parameters $\theta\in \Theta$ that are observable in the data and on private parameters $v_i\in \V_i$ %that are 
\etnedit{observable} only %observable 
by each player $i$. We denote with $U_i(\vec{b}; \theta, v_i)$ the utility of a player $i$.

We consider a setting where game $G$ is played repeatedly. At each iteration $t$, each player picks a strategy $b_i^t$ and nature picks a set of parameters $\theta^t$. The private parameter $v_i$ of each player $i$ remains fixed throughout the sequence. We will denote with $\{\vec{b}\}_t$ the sequence of strategy profiles and with $\{\theta\}_t$ the sequence of nature's parameters. We assume that the sequence of strategy profiles $\{\vec{b}\}_t$ and the sequence of nature's parameters $\{\theta\}_t$ are observable in the data. However, the private parameter $v_i$ for each player $i$ is not observable.  The inference problem we consider in this paper is the problem of inferring these private values from the observable data.

We will refer to the two observable sequences as the \emph{sequence of play}. In order to be able to infer anything about agent's private values, we need to make some rationality assumption about the way agents choose their strategies. Classical work of inference assumes that each agents best response to the statistical properties of the observable environment and the strategies of other players, in other words assumes that game play is at a stable Nash equilibrium. In this paper, we replace this equilibrium assumption with the weaker {\em no-regret} assumption, stating that the utility obtained throughout the sequence is at least as high as any single strategy $b_i$ would have yielded, if played at every time step. If the play is stable throughout the sequence, no-regret is exactly the property required for the play to be at Nash equilibrium. However, no-regret can be reached by many natural learning algorithms without prior information, which makes this a natural rationally assumption for agents who are learning how to best play in a game while participating. More formally, we make no assumption of what agents do to learn, but rather will assume that agents learn well enough to satisfy the following no-regret property with a small error.

%For the sequence of play that we observe to have $\epsilon_i$-regret for advertiser $i$, it should be the case that:
\etnedit{A sequence of play that we observe has $\epsilon_i$-regret for advertiser $i$ if:}
\begin{equation}\label{eqn:eps-regret}
\forall b'\in B_i: \frac{1}{T} \sum_{t=1}^{T} U_i\left(\vec{b}^t;\theta^t, v_i\right) \geq \frac{1}{T} \sum_{t=1}^{T} U_i\left(b',\vec{b}_{-i}^t;\theta^t, v_i\right)-\epsilon_i
\end{equation}

This leads to the following definition of a \emph{rationalizable set under no-regret learning}.
\begin{definition}[Rationalizable Set]
A pair $(\epsilon_i,v_i)$ of a value $v_i$ and error $\epsilon_i$ is a rationalizable pair for player $i$ if it satisfies Equation \eqref{eqn:eps-regret}. We refer to the set of such pairs as the \emph{rationalizable set} and denote it with $\NR$.
\end{definition}

The rationality assumption of the inequality (\ref{eqn:eps-regret}) models players who may be learning from the experience while participating in the game. We assume that the strategies $b_i^t$ and nature's parameters $\theta^t$ are picked simultaneously, so agent $i$ cannot pick his strategy dependent on the state of nature $\theta^t$ or the strategies of other agents $b_{i-1}^t$. This makes the standard of a single best strategy $b_i$ natural, as chosen strategies cannot depend on $\theta^t$ or $b_{i-1}^t$. Beyond this, we do not make any assumption on what information is available for the agents, and how they choose their strategies. Some learning algorithms achieve this standard of learning with very minimal feedback (only the value experienced by the agent as he/she participates). If the agents know a distribution of possible nature parameters $\theta$ \etnedit{ or is}
%. Alternately, agents may be 
able to observe the past values of the parameters $\theta^t$ or the strategies $b_{i-1}^t$ (or both), and then they can use this observed past information to select their strategy at time step $t$. Such additional information is clearly useful in speeding up the learning process for agents. We will make no assumption on what information is available for agents for learning, or what algorithms they may be using to update their strategies. We will simply assume that they use algorithms that achieve the no-regret (small regret) standard expressed in inequality (\ref{eqn:eps-regret}).

For general games and general private parameter spaces, the \emph{rationalizable set} can be an arbitrary set with no good statistical or convexity properties. Our main result is to show that for the game of sponsored search auction we are studying in this paper, the set is convex and has good convergence properties in terms of estimation error. 

\section{Sponsored Search Auctions Model}\label{sec:auction-model}
% !TEX root=main.tex

We consider data generated by advertisers repeatedly participating in sponsored search auction. The game $G$ that is being repeated at each stage is an instance of a generalized second price auction triggered by a search query.

The rules of each auction are as follows\footnote{ignoring small details that we will ignore and are rarely in effect}: Each advertiser $i$ is associated with a click probability $\gamma_i$ and a scoring coefficient $s_i$ and is asked to submit a bid-per-click $b_i$. Advertisers are ranked by their rank-score $q_i=s_i\cdot b_i$ and allocated positions in decreasing order of rank-score as long as they pass a rank-score reserve $r$.  If advertisers also pass a higher mainline reserve $m$, then they may be allocated in the positions that appear in the mainline part of the page, but at most $k$ advertisers are placed on the mainline.

If advertiser $i$ is allocated position $j$, then he is clicked with some probability $p_{ij}$, which we will assume to be separable into a part $\alpha_j$ depending on the position and a part $\gamma_i$ depending on the advertiser, and that the position related effect is the same in all the participating auctions:
\begin{equation}
p_{ij} = \alpha_j\cdot \gamma_i
\end{equation}
We denote with $\vec{\alpha}=(\alpha_1,\ldots,\alpha_m)$ the vector of position coefficients. All the mentioned sets of parameters $\theta=(\vec{s}, \vec{b}, \gamma, r,m,k,\vec{\alpha})$ are observable in the data.

If advertiser $i$  is allocated position $j$, then he pays only when he is clicked and his payment, i.e. his cost-per-click (CPC) is the minimal bid he had to place to keep his position, which is:
\begin{equation}
c_{ij}(\vec{b}; \theta) = \max\left\{\frac{s_{\pi(j+1)}\cdot b_{\pi(j+1)}}{s_{i}}, \frac{r}{s_i},\frac{m}{s_i}\cdot {\bf 1}\{j\in M\}\right\}
\end{equation}
where by $\pi(j)$ we denote the advertiser that is allocated position $j$ and with $M$ we denote the set of mainline positions.

We also assume that each advertiser has a value-per-click (VPC) $v_i$, which is not observed in the data. If under a bid profile $\vec{b}$, advertiser $i$ is allocated slot $\sigma_i(\vec{b})$, his expected utility is:
\begin{equation}
U_i(b;\theta, v_i) = \alpha_{\sigma_i(b)}\cdot \gamma_i \cdot \left( v_i - c_{i\sigma_i(\vec{b})}(\vec{b};\theta)\right)
\end{equation}
We will denote with:
\begin{equation}
P_i(\vec{b};\theta) = \alpha_{\sigma_i(\vec{b})}\cdot \gamma_i
\end{equation}
the probability of a click as a function of the bid profile and with:
\begin{equation}
C_i(\vec{b}; \theta) = \alpha_{\sigma_i(\vec{b})}\cdot \gamma_i \cdot c_{i\sigma_i(\vec{b})}(\vec{b};\theta)
\end{equation}
the expected payment as a function of the bid profile. Then the utility of advertiser $i$ at each auction is:
\begin{equation}
U_i(\vec{b};\theta, v_i) = v_i \cdot P_i(\vec{b})-C_i(\vec{b})
\end{equation}

The latter fits in the general repeated game framework, where the strategy space $B_i=\R_+$ of each player $i$ is simply any non-negative real number.  The private parameter of a player $v_i$ is an advertiser's value-per-click (VPC) and the set of parameters that affect a player's utility at each auction and are observable in the data is $\theta$. At each auction $t$ in the sequence  the observable parameters $\theta^t$ can take arbitrary values that depend on the specific auction. However, we assume that the per-click value of the advertiser remains fixed throughout the sequence.

\paragraph{Batch Auctions} Rather than viewing a single auction as a game that is being repeated, we will view a batch of many auctions as the game that is repeated in each stage. This choice is reasonable, as it is impossible for advertisers to update their bid after each auction. Thus the utility of a single stage of the game is simply the average of the utilities of all the auctions that the player participated in during this time period. Another way to view this setting is that the parameter $\theta$ at each iteration $t$ is not deterministic but rather is drawn from some distribution $D^t$ and a player's utility at each iteration is the expected utility over $D^t$. In effect, the distribution $D^t$ is the observable parameter, and utility depends on the distribution, and not only on a single draw from the distribution. With this in mind, the per-period probability of click is
\begin{equation}
\textstyle{P_i^t(\vec{b}^t)=\E_{\theta\sim D^t}[P_i(\vec{b}^t;\theta)]}
\end{equation}
and the per-period cost is
\begin{equation}
\textstyle{C_i^t(\vec{b}^t) = \E_{\theta\sim D^t}[C_i(\vec{b}^t;\theta)].}
\end{equation}

We will further assume that the volume of traffic is the same in each batch, so the average utility of an agent over a sequence of stages is expressed by
\begin{equation}
 \frac{1}{T} \sum_{t=1}^{T}\left( v_i\cdot P_i^t(\vec{b}^t) - C_i^t(\vec{b}^t)\right).
\end{equation}

Due to the large volume of auctions that can take place in between these time-periods of bid changes, it is sometimes impossible to consider all the auctions and calculate the true empirical distribution $D^t$. Instead it is more reasonable to approximate $D^t$, by taking a sub-sample of the auctions. This would lead only to statistical estimates of both of these quantities. We will denote these estimates by $\hat{P}_i^t(\vec{b})$ and $\hat{C}_i^t(\vec{b})$ respectively. We will analyze the statistical properties of the estimated rationalizable set under such subsampling in Section \ref{sec:statistical}. 

\section{Properties of Rationalizable Set for Sponsored Search Auctions}\label{sec:convexity}
% !TEX root=main.tex

For the auction model that we are interested in, Equation \eqref{eqn:eps-regret} that determines whether a pair $(\epsilon,v)$ is rationalizable boils down to:
\begin{equation}
\forall b'\in \R_+: v\cdot \frac{1}{T} \sum_{t=1}^{T}\left( P_i^t(b',\vec{b}_{-i}^t)-P_i^t(\vec{b}^t) \right) \leq \frac{1}{T} \sum_{t=1}^{T} \left(C_i^t(b',\vec{b}_{-i}^t)-C_i^t(\vec{b}^t)\right) +\epsilon
\end{equation}
If we denote with 
\begin{equation}
\Delta P(b')= \frac{1}{T} \sum_{t=1}^{T}\left(  P_i^t(b',\vec{b}_{-i}^t)-P_i^t(\vec{b}^t)\right),
\end{equation}
the increase in the average probability of click from player $i$  switching to a fixed alternate bid $b'$ and with
\begin{equation}
\Delta C(b')=\frac{1}{T} \sum_{t=1}^{T} \left(C_i^t(b',\vec{b}_{-i}^t)-C_i^t(\vec{b}^t)\right),
\end{equation}
the increase in the average payment from player $i$  switching to a fixed alternate bid $b'$, then the condition simply states:
\begin{equation}\label{eqn:halfplanes}
\forall b'\in \R_+: v\cdot \Delta P(b') \leq \Delta C(b') + \epsilon
\end{equation}
Hence, the rationalizable set $\NR$ is an
envelope of the family of half planes obtained by varying $b \in {\mathbb R}_+$ in Equation \eqref{eqn:halfplanes}.

We conclude this section by characterizing some basic properties of this set, which will be useful both in analyzing its statistical estimation properties and in computing the empirical estimate of $\NR$ from the data.
\begin{lemma}
The set $\NR$ is a closed convex set.
\end{lemma}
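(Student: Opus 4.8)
The plan is to read Equation~\eqref{eqn:halfplanes} as exhibiting $\NR$ as an intersection of closed half-planes in the two-dimensional parameter space of pairs $(v,\epsilon)$, and then to invoke the elementary fact that an arbitrary intersection of closed convex sets is itself closed and convex. All the substantive work has in fact already been done in the reduction leading to Equation~\eqref{eqn:halfplanes}; what remains is to read off the geometric consequence.

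First I would fix an arbitrary alternate bid $b'\in\R_+$ and observe that the quantities $\Delta P(b')$ and $\Delta C(b')$ are constants: they depend on the observed sequence of play and on the choice of $b'$, but not on the unknown pair $(v,\epsilon)$. Hence the single constraint
\[
v\cdot \Delta P(b') \leq \Delta C(b')+\epsilon
\]
is an \emph{affine} inequality in the variables $(v,\epsilon)$, and the set $H_{b'}$ of pairs satisfying it is a closed half-plane. Being the solution set of a non-strict linear inequality, $H_{b'}$ is both closed and convex.

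Next I would express $\NR$ as $\bigcap_{b'\in\R_+} H_{b'}$, which is precisely the assertion that $(\epsilon,v)$ is rationalizable if and only if it satisfies the inequality for \emph{every} $b'$. Closedness of $\NR$ then follows because an arbitrary intersection of closed sets is closed; convexity follows because if $(v_1,\epsilon_1)$ and $(v_2,\epsilon_2)$ both lie in every $H_{b'}$, then any convex combination of them lies in each $H_{b'}$ as well (each $H_{b'}$ being convex), and hence in the intersection. This completes the argument.

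I do not anticipate a genuine obstacle here: once the reduction to Equation~\eqref{eqn:halfplanes} is in hand, closedness and convexity are automatic. The only two points worth stating with care are that the family of half-planes is in general \emph{uncountable} (indexed by $b'\in\R_+$), so one must appeal to the general intersection property rather than only the finite-intersection case, and that the defining inequality is non-strict, so that each $H_{b'}$ is genuinely closed rather than merely having closed complement.
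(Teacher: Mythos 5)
Your proposal is correct and follows essentially the same route as the paper: both view $\NR$ as the solution set of the family of non-strict affine inequalities in $(v,\epsilon)$ indexed by $b'$ (Equation~\eqref{eqn:halfplanes}), deduce convexity by taking convex combinations of the constraints, and deduce closedness from the inequalities being non-strict. Your version is merely a more explicit write-up, noting that the intersection is over an uncountable family of closed half-planes.
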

\begin{proof}
The Lemma follows by the fact that $\NR$ is defined by a set of linear constraints. Any convex combination of two points in the set will also satisfy these constraints, by taking the convex combination of the constraints that the two points have to satisfy. The closedness follows from the fact that points that satisfy the constraints with equality are included in the set.
\end{proof}

\begin{lemma}\label{lem:upper-lower}
For any error level $\epsilon$, the set of values that belong to the rationalizable set is characterized by the interval:
\begin{equation}
v\in \left[ \max_{b': \Delta P(b')< 0}  \frac{\Delta C(b')+\epsilon}{\Delta P(b')},  \min_{b': \Delta P(b') > 0} \frac{\Delta C(b')+\epsilon}{\Delta P(b')}\right]
\end{equation}
In particular the data are not rationalizable for error level $\epsilon$ if the latter interval is empty.
\end{lemma}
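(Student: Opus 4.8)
The plan is to fix the error level $\epsilon$ and read off from the half-plane description in Equation~\eqref{eqn:halfplanes} exactly which values $v$ survive. For a fixed $\epsilon$ the rationalizable values form a one-dimensional slice of $\NR$, and since $\NR$ is convex by the previous lemma, this slice is itself convex, hence an interval; it therefore remains only to identify its two endpoints explicitly. Each constraint $v\cdot \Delta P(b')\le \Delta C(b')+\epsilon$ is linear in $v$, so the slice is an intersection of half-lines, and the endpoints will come from grouping these half-lines according to the orientation of the coefficient $\Delta P(b')$.

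First I would split the family of constraints according to the sign of $\Delta P(b')$. For every $b'$ with $\Delta P(b')>0$, dividing through by $\Delta P(b')$ preserves the inequality and yields the upper bound $v\le \frac{\Delta C(b')+\epsilon}{\Delta P(b')}$; intersecting these half-lines over all such $b'$ gives $v\le \min_{b':\Delta P(b')>0}\frac{\Delta C(b')+\epsilon}{\Delta P(b')}$. For every $b'$ with $\Delta P(b')<0$, dividing reverses the inequality and yields the lower bound $v\ge \frac{\Delta C(b')+\epsilon}{\Delta P(b')}$; intersecting over all such $b'$ gives $v\ge \max_{b':\Delta P(b')<0}\frac{\Delta C(b')+\epsilon}{\Delta P(b')}$. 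Combining the two families produces precisely the claimed interval.

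Next I would dispose of the degenerate constraints with $\Delta P(b')=0$, which do not appear in the stated interval. For such $b'$ the inequality collapses to $0\le \Delta C(b')+\epsilon$, a condition independent of $v$: it either holds, in which case the constraint imposes nothing on the slice, or it fails, in which case no value is rationalizable at this error level. Thus the set of rationalizable values equals the displayed interval whenever these $v$-independent conditions hold, and is empty otherwise; in either case, if the lower endpoint exceeds the upper endpoint the interval is empty and the data are not rationalizable, which is exactly the stated criterion.

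Given how elementary the core argument is, the only genuine point of care is the passage from the infimum and supremum over the noncompact deviation set $\R_+$ to the $\min$ and $\max$ written in the statement. The hard part will be justifying that these extrema are attained (or, equivalently, restating the endpoints with $\inf$ and $\sup$); I would address this by appealing to the piecewise structure of $\Delta P(\cdot)$ and $\Delta C(\cdot)$, whose breakpoints are determined by the finitely many competing rank-scores in the observed auctions, so that only finitely many candidate deviations $b'$ can determine each endpoint. Everything else is the routine bookkeeping of dividing a linear inequality by a coefficient of a fixed sign.
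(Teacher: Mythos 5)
Your proof is correct and follows essentially the same route as the paper, whose entire argument is the one-line observation that one divides Equation~\eqref{eqn:halfplanes} by $\Delta P(b')$ and takes cases on its sign. Your additional care about the $\Delta P(b')=0$ constraints and about whether the $\max$/$\min$ are attained goes beyond what the paper records, but these are elaborations of the same argument rather than a different approach.
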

\begin{proof}
Follows from Equation \eqref{eqn:halfplanes}, by dividing with $\delta P(b')$ and taking cases on whether $\Delta P(b')$ is positive or negative.
\end{proof}

To be able to estimate the rationalizable set $\NR$ we will need to make a few additional assumptions.
All assumptions are natural, and likely satisfied by any data. Further, the assumptions are easy to test for in the data, and the parameters needed can be estimated.

Since $\NR$ is a closed convex set, it is conveniently represented by the set of support
hyperplanes defined by the functions $P_i^t(\cdot,\cdot)$ and $C^t_i(\cdot,\cdot)$. Our first assumption is on the functions $P_i^t(\cdot,\cdot)$ and $C^t_i(\cdot,\cdot)$.

\begin{assumption}\label{assume1}
The support of bids is a compact set $B=[0,\overline{b}]$.
For each bidvector $\vec{b_{-i}^t}$ % \in \times_{I-1}B$
the functions $P_i^t(\cdot,\vec{b}_{-i}^t)$ and $C^t_i(\cdot,\vec{b}_{-i}^t)$
are continous,  monotone increasing %, \etedit{and concave and convex respectively,}
and bounded on $B$.
\end{assumption}

\paragraph{Remark} These assumptions are natural and are satisfied by our data. In most applications, it is easy to have a simple upper bound $M$ on the maximum conceivable value an advertiser can have for a click, and with this maximum, we can assume that the bids are also limited to this maximum, so can use $B=M$ as a bound for the maximum bid. The probability of getting a click as well as the cost per-click are clearly increasing function of the agent's bid in the sponsored search auctions considered in this paper, \etedit{and the continuity of these functions is a good model in large or uncertain environments.}
 %\etedit{The assumption that is most dependent on the data are the convexity and concavity of the click cost and probability.}

We note that properties in Assumption \ref{assume1} implies that the same is also satisfied by
the linear combinations of functions, as a linear combination of monotone functions is monotone, implying that
the functions $\Delta P(b)$ and $\Delta C(b)$ are also monotone \etedit{and continuous}.

The assumption further implies that for any value $v$, there exists at least one
element of $B$ that maximizes $v\Delta P(b)-\Delta C(b)$, \etedit{as $v\Delta P(b)-\Delta C(b)$ is a
continuous function on a compact set}.

\subsection{Properties of the Boundary}

Now we can study the properties of the set $\NR$ by characterizing its boundary, denoted
$\partial \NR$.
\begin{lemma}
Under Assumption \ref{assume1},
$\partial \NR=\{(v,\epsilon)\,:\,\sup_b\left(v\Delta P(b)-\Delta C(b)\right)=\epsilon\}
$
\end{lemma}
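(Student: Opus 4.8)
The plan is to recognize $\NR$ as the epigraph of a convex function of $v$ and then invoke the standard description of the boundary of such an epigraph. Write $\phi(v) := \sup_{b}\left(v\Delta P(b) - \Delta C(b)\right)$. By Equation~\eqref{eqn:halfplanes}, a pair $(v,\epsilon)$ lies in $\NR$ if and only if $v\Delta P(b) - \Delta C(b) \le \epsilon$ holds for every $b$, i.e. if and only if $\phi(v) \le \epsilon$. Hence $\NR = \{(v,\epsilon) : \epsilon \ge \phi(v)\}$ is exactly the epigraph of $\phi$, and proving the lemma amounts to showing that the boundary of this epigraph is the graph $\{(v,\epsilon) : \epsilon = \phi(v)\}$.

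Next I would record the two properties of $\phi$ that make this work. First, $\phi$ is finite-valued: under Assumption~\ref{assume1} the functions $\Delta P$ and $\Delta C$ are continuous and bounded on the compact set $B$, so for each $v$ the supremum is attained and finite (this is exactly the maximizer noted just before this subsection). Second, $\phi$ is convex, being a pointwise supremum of the family of functions $v \mapsto v\Delta P(b) - \Delta C(b)$, each of which is affine in $v$. A finite convex function on $\R$ is automatically continuous, so $\phi$ is continuous.

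It then remains to identify the boundary of the epigraph of the continuous function $\phi$. For the easy inclusion, if $\epsilon = \phi(v)$ then the nearby points $(v,\phi(v)-\delta)$ with $\delta>0$ violate $\epsilon' \ge \phi(v)$ and hence lie outside $\NR$, so $(v,\phi(v))$ is not interior and is therefore a boundary point. For the reverse inclusion I would show that the interior of $\NR$ is the strict epigraph: if $\epsilon > \phi(v)$, pick $\eta>0$ with $\phi(v) < \epsilon - \eta$, use continuity of $\phi$ to find a neighborhood of $v$ on which $\phi(v') < \epsilon - \eta/2$, and conclude that an entire neighborhood of $(v,\epsilon)$ satisfies $\epsilon' \ge \phi(v')$ and so lies in $\NR$. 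Since $\NR$ is closed (by the lemma above), its boundary equals $\NR \setminus \mathrm{int}(\NR) = \{(v,\epsilon) : \epsilon = \phi(v)\}$, which is the claim. The only real content is the continuity step of the second paragraph; this is where finiteness of $\phi$, and hence the boundedness imposed in Assumption~\ref{assume1}, is essential, since a convex function allowed to take the value $+\infty$ could fail to be continuous and the strict epigraph could then differ from the interior.
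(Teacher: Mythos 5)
Your proof is correct, and it reorganizes the argument in a genuinely cleaner way than the paper does. The paper proves the same two inclusions directly: for the graph points it exhibits nearby points on either side (one violating the constraint at the maximizing bid $b^*$, one satisfying all constraints strictly), and for the converse it rules out $\sup_b(v\Delta P(b)-\Delta C(b))>\epsilon$ because such a point is not even in $\NR$, and rules out $\sup_b(v\Delta P(b)-\Delta C(b))<\epsilon$ by an explicit $(\delta_1,\delta_2)$ perturbation showing a whole neighborhood sits inside $\NR$. That last perturbation step is exactly a hand-rolled proof that $\phi(v)=\sup_b(v\Delta P(b)-\Delta C(b))$ is (Lipschitz) continuous in $v$, via the inequality $\phi(v+\delta_1)\le\phi(v)+|\delta_1|\sup_b\Delta P(b)$; you instead get continuity for free from the observation that $\phi$ is a finite pointwise supremum of affine functions of $v$, hence convex and therefore continuous on $\R$, and then read off $\partial\NR$ as the graph of $\phi$ bounding its epigraph. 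What your route buys is brevity and a structural explanation (it makes visible why the earlier convexity lemma and this boundary characterization are two faces of the same fact, namely that $\NR$ is the epigraph of the convex upper envelope $\phi$); what the paper's route buys is an explicit quantitative modulus of continuity in terms of $\sup_b\Delta P(b)$, which is in the spirit of the later estimation bounds. You are also right to flag that finiteness of $\phi$ (hence the boundedness and compactness in Assumption \ref{assume1}) is the load-bearing hypothesis: without it the interior of the epigraph need not coincide with the strict epigraph. One could shorten your argument further by noting $\mathrm{int}(\NR)\supseteq\{\epsilon>\phi(v)\}$ and $\NR$ closed already force $\partial\NR\subseteq\{\epsilon=\phi(v)\}$, with the reverse inclusion from your "points just below" observation, but as written the proof is complete.
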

\begin{proof}
(1) Suppose that $(v,\epsilon)$ solves $\sup_b\left(v\Delta P(b)-\Delta C(b)\right)=\epsilon$.
Provided the continuousness of functions $\Delta P(b)$ and $\Delta C(b)$ the supremum exists
and it is attained at some $b^*$ in the support of bids.
Take some $\delta>0$. Taking point $(v',\epsilon')$ such that $v'=v$ and $\epsilon'=\epsilon-\delta$
ensures that $v'\Delta P(b^*)-\Delta C(b^*)>\epsilon'$ and thus $(v'\epsilon') \not\in \NR$.
Taking the point $(v^{\prime\prime},\epsilon^{\prime\prime})$ such that
$v^{\prime\prime}=v$ and $\epsilon^{\prime\prime}=\epsilon+\delta$ ensures that
$$
\sup_b\left(v^{\prime\prime}\Delta P(b)-\Delta C(b)\right)<\epsilon^{\prime\prime},
$$
and thus for any bid $b$: $v^{\prime\prime}\Delta P(b)-\Delta C(b)<\epsilon^{\prime\prime}$. Therefore
$(v^{\prime\prime},\epsilon^{\prime\prime}) \in \NR$.
Provided that $\delta$ was arbitrary, this ensures that in any any neighborhood of $(v,\epsilon)$
there are points that both belong to set $\NR$ and that do not belong to it. This means that this
point belongs to $\partial \NR$.\\
(2) We prove the sufficiency part by contradiction. Suppose that $(v,\epsilon) \in \partial \NR$ and
$\sup_b\left(v\Delta P(b)-\Delta C(b)\right)\neq \epsilon$.
If $\sup_b\left(v\Delta P(b)-\Delta C(b)\right)> \epsilon$, provided that the objective
function is bounded and has a compact support, there exists a point $b^*$ where the supremum is attained.
For this point $v\Delta P(b^*)-\Delta C(b^*)> \epsilon$. This means that for this
bid the imposed inequality constraint is not satisfied and this point
does not belong to $\NR$, which is the contradiction to our assumption.\\
Suppose then that $\sup_b\left(v\Delta P(b)-\Delta C(b)\right)< \epsilon$.
Let $\Delta \epsilon=\epsilon-\sup_b\left(v\Delta P(b)-\Delta C(b)\right)$.
Take some $\delta<\Delta \epsilon$ and take an arbitrary $\delta_1,\delta_2$ be such that $\max\left\{
|\delta_1|,\frac{|\delta_2|}{\sup_b\Delta P(b)}\right\}<\frac{\delta}{2}$.
Let $v'=v+\delta_1$ and $\epsilon'=\epsilon+\delta_2$. Provided that
$$
\sup_b(v\Delta P(b)-\Delta C(b)) \geq \sup_b((v+\delta_1)\Delta P(b)-\Delta C(b))-\delta_1 \sup_b\Delta P(b)
\delta_1,
$$
for any $\tau \in (-\infty,\Delta\epsilon)$
we have
$$
\sup_b(v'\Delta P(b)-\Delta C(b))< \epsilon-\tau+ \delta_1 \sup_b\Delta P(b).
$$
Provided that $|\delta_1 \sup_b\Delta P(b)|<\delta/2 < \Delta\epsilon/2$, for any $|\delta_2|<\delta/2$,
we can find $\tau$ with $-\tau+ \delta_1 \sup_b\Delta P(b)=\delta_2$.
Therefore
$$
\sup_b(v'\Delta P(b)-\Delta C(b))< \epsilon',
$$
and thus for any $b$: $v'\Delta P(b)-\Delta C(b)< \epsilon'$.
Therefore $(v',\epsilon') \in \NR$.
This means that for $(v,\epsilon)$ we constructed a neighborhood
of size $\delta$ such that each element of that neighborhood is
an element of $\NR$. Thus $(v,\epsilon)$ cannot belong to $\partial \NR$.
\end{proof}
The next step will be to establish the properties of the boundary
by characterizing the solutions of the optimization
problem  of selecting the optimal bid single $b$ with for a given value $v$ and the sequence of bids by other agents, defined by $\sup_b\left(v\Delta P(b)-\Delta C(b)\right)$.
\begin{lemma}\label{monotone:comparative:statics}
Let $b^*(v)=\mbox{arg}\sup_b\left(v\Delta P(b)-\Delta C(b)\right)$. Then $b^*(\cdot)$
is upper-hemicontinuous and monotone.
\end{lemma}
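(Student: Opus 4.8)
The plan is to establish the two properties separately, exploiting the regularity of the objective $f(b,v)=v\,\Delta P(b)-\Delta C(b)$. By the remark following Assumption~\ref{assume1}, $\Delta P$ and $\Delta C$ are continuous and monotone increasing on the compact interval $B=[0,\overline b]$, so $f$ is jointly continuous in $(b,v)$ and the feasible set $B$ is compact and does not vary with $v$. These are exactly the hypotheses needed for the two standard tools I would invoke.

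Upper-hemicontinuity is then an immediate application of Berge's Maximum Theorem: the objective $f$ is continuous and the constraint correspondence $v\mapsto B$ is constant, hence trivially continuous and compact-valued. Berge's theorem guarantees that the set of maximizers $b^*(v)$ is nonempty (this also reproves the existence remark made after Assumption~\ref{assume1}), compact-valued, and upper-hemicontinuous in $v$. There is nothing to compute here; I would simply cite the theorem.

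For monotonicity I would use a revealed-preference (increasing-differences) argument rather than any smoothness assumption. The key observation is that $f$ has increasing differences in $(b,v)$: for $b_2>b_1$,
\[
f(b_2,v)-f(b_1,v)=v\bigl(\Delta P(b_2)-\Delta P(b_1)\bigr)-\bigl(\Delta C(b_2)-\Delta C(b_1)\bigr),
\]
and the coefficient of $v$ is nonnegative because $\Delta P$ is increasing, so this difference is nondecreasing in $v$. Concretely, take $v_1<v_2$ with maximizers $b_1\in b^*(v_1)$ and $b_2\in b^*(v_2)$. Adding the two optimality inequalities
\[
v_1\,\Delta P(b_1)-\Delta C(b_1)\ \ge\ v_1\,\Delta P(b_2)-\Delta C(b_2),
\]
\[
v_2\,\Delta P(b_2)-\Delta C(b_2)\ \ge\ v_2\,\Delta P(b_1)-\Delta C(b_1),
\]
yields $(v_2-v_1)\bigl(\Delta P(b_2)-\Delta P(b_1)\bigr)\ge 0$, hence $\Delta P(b_2)\ge \Delta P(b_1)$. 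Since $\Delta P$ is monotone increasing, this is consistent with $b_2\ge b_1$, and it shows that whenever $b_2<b_1$ one must have $\Delta P(b_2)=\Delta P(b_1)$, which forces equality throughout and makes both bids optimal at both values. Consequently the largest (and smallest) selections of $b^*(\cdot)$ are nondecreasing in $v$, which is the sense in which the correspondence is monotone.

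The main obstacle is not analytic but definitional: because $\Delta P$ may be locally constant, the argmax can contain several bids and ties can occur, so ``monotone'' must be read in the strong-set-order sense, equivalently as monotonicity of the minimal and maximal maximizers, exactly as produced by the revealed-preference inequality above. Everything else reduces to a direct invocation of the maximum theorem and one elementary inequality.
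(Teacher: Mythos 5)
Your proof is correct and follows essentially the same route as the paper: upper-hemicontinuity via Berge's maximum theorem using the joint continuity of $v\,\Delta P(b)-\Delta C(b)$ on the compact bid set, and monotonicity via the increasing-differences (supermodularity) property driven by the monotonicity of $\Delta P$. The only difference is cosmetic --- the paper cites Topkis' monotonicity theorem where you unpack it by the standard revealed-preference inequality, and you are somewhat more careful about the set-valued/strong-set-order reading of ``monotone,'' which the paper glosses over.
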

\begin{proof}
%Provided that
\etedit{By Assumption (\ref{assume1})} the function $v\Delta P(b)-\Delta C(b)$ is continuous
in $v$ and $b$, then the upper hemicontinuity of $b^*(\cdot)$
follows directly from the Berge's maximum theorem.
\par
%We will show that monotonicity of $b^*(\cdot)$ without relying on the continuity of  $\Delta P(b)$ and $\Delta C(b)$. To do this, consider the function $q(b;v)=v\Delta P(b)-\Delta C(b)$.
\etedit{To show that $b^*(\cdot)$ is monotone} consider the function $q(b;v)=v\Delta P(b)-\Delta C(b)$.
We'll show that this
function is {\it supermodular} in $(b;v)$, that is, for $b'>b$ and $v'>v$ we have
$$
q(b';v')+q(b;v)\ge q(b';v)+q(b;v').
$$
To see this observe that if we take $v'>v$, then
$$
q(b;v')-q(b;v)=(v'-v)\Delta P(b),
$$
which is non-decreasing in $b$ due to the monotonicity of $\Delta P(\cdot)$, implying that $q$ is supermodular.
Now we can apply the \etedit{Topkis' (\cite{topkis}, \cite{vives})}
monotonicity theorem from which we immediately conclude
that $b^*(\cdot)$ is non-decreasing.
\end{proof}
Lemma \ref{monotone:comparative:statics} provides us
a powerful result of the monotonicity of the
optimal response  function $b^*(v)$ which only relies
on the boundedness, compact support and monotonicity of
functions $\Delta P(\cdot)$
and $\Delta C(\cdot)$ {\it without relying on their differentiability}.
The downside of this is that $b^*(\cdot)$ can be
set valued or discontinuously change in the value.
To avoid this we impose the following additional assumption.

\begin{assumption}\label{assume2}
For each $b_1$ and $b_2>0$ the {\it incremental cost per click}
function
$$
\mbox{ICC}(b_2,b_1)=\frac{\Delta C(b_2)-\Delta C(b_1)}{\Delta P(b_2)-\Delta P(b_1)}
$$
is continuous in $b_1$ \dnedit{for each $b_2 \neq b_1$ and it is continuous in $b_2$
for each $b_1 \neq b_2$.}\dndelete{and $b_2$ (provided that $b_2 \neq b_1$) and bounded away from zero.}
Moreover for any $b_4>b_3>b_2>b_1$ on $B$:
$
\mbox{ICC}(b_4,b_3) > \mbox{ICC}(b_2,b_1).
$
\end{assumption}
\paragraph{Remark} Assumption \ref{assume2} requires that there are no discounts
on buying clicks ``in bulk": for any average position of the bidder
in the sponsored search auction, an incremental increase in
the click yield requires the corresponding increase in the payment.
In other words, ``there is no free clicks". In \cite{athey:nekipelov:2010}
it was shown that in sponsored auctions with uncertainty and a
reserve price, the condition in Assumption \ref{assume2}
is satisfied with the lower bound on the incremental cost
per click being the maximum of the bid and the reserve price (conditional
on the bidder's participation in the auction).

\begin{lemma}\label{continuity}
Under Assumptions \ref{assume1} and \ref{assume2}
$b^*(v)=\mbox{arg}\sup_b\left(v\Delta P(b)-\Delta C(b)\right)$
is a continuous monotone function.
\end{lemma}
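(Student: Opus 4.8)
The plan is to upgrade the conclusion of Lemma~\ref{monotone:comparative:statics} --- that $b^*(\cdot)$ is a monotone, upper-hemicontinuous correspondence --- to the assertion that it is a single-valued continuous \emph{function}. Two things remain to be shown: (i) that $b^*(v)$ is a singleton for every $v$, and (ii) that this singleton varies continuously with $v$. The additional strength needed for (i) is exactly what Assumption~\ref{assume2} supplies; continuity in (ii) then follows almost for free from the upper hemicontinuity we already have.

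For (i) the cleanest route is to reparametrize by the click increment. First note that for $\mbox{ICC}(b_2,b_1)$ to be well defined for all $b_1\neq b_2$, as Assumption~\ref{assume2} demands, the denominator $\Delta P(b_2)-\Delta P(b_1)$ must never vanish, so $\Delta P(\cdot)$ is strictly increasing; being continuous (Assumption~\ref{assume1}) it is then a homeomorphism from $B$ onto $[\Delta P(0),\Delta P(\overline{b})]$. Setting $x=\Delta P(b)$ and defining $F$ by $\Delta C(b)=F(\Delta P(b))$, the quantity $\mbox{ICC}(b_2,b_1)$ is precisely the slope of the chord of $F$ between $\Delta P(b_1)$ and $\Delta P(b_2)$. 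The key observation I would establish is that the strict chord-slope monotonicity over disjoint intervals in Assumption~\ref{assume2}, together with the stated continuity of $\mbox{ICC}$, is equivalent to \emph{strict convexity} of $F$: a flat (affine) piece of $F$ would contain two disjoint chords of equal slope, contradicting $\mbox{ICC}(b_4,b_3)>\mbox{ICC}(b_2,b_1)$. The objective then reads
\[
q(b;v)=v\,\Delta P(b)-\Delta C(b)=v\,x-F(x),
\]
which is strictly concave in $x$.

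Single-valuedness now follows: a strictly concave function on the compact interval $[\Delta P(0),\Delta P(\overline{b})]$ attains its maximum at a unique point $x^*(v)$, so $b^*(v)=(\Delta P)^{-1}(x^*(v))$ is a single bid. (Equivalently, without changing variables, if $b_1<b_2$ were both maximizers then optimality forces $v=\mbox{ICC}(b_2,b_1)$; writing this chord slope as the mediant of $\mbox{ICC}(b_m,b_1)$ and $\mbox{ICC}(b_2,b_m)$ for an interior $b_m$ and using strict convexity gives $\mbox{ICC}(b_m,b_1)<v$, i.e. $q(b_m;v)>q(b_1;v)$, contradicting optimality of $b_1$.) For (ii) I would argue that a monotone single-valued selection of an upper-hemicontinuous correspondence cannot jump. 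A monotone function can only fail continuity through a jump, so suppose $\lim_{v\uparrow v_0}b^*(v)=b^-<b^+=\lim_{v\downarrow v_0}b^*(v)$. Upper hemicontinuity of $b^*$ on the compact range $B$ is equivalent to its graph being closed, so taking $v_n\uparrow v_0$ and $v_n'\downarrow v_0$ places both $b^-$ and $b^+$ in $b^*(v_0)$; since $b^*(v_0)$ is a singleton by (i), $b^-=b^+$ and there is no jump, so $b^*(\cdot)$ is continuous. (Alternatively, continuity of $x^*(v)$ is the standard continuity of the argmax of a strictly concave objective, and $b^*=(\Delta P)^{-1}\circ x^*$ is continuous as a composition of continuous maps.)

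I expect the main obstacle to be part (i), and specifically the passage from the \emph{disjoint-interval} form of Assumption~\ref{assume2} to a statement controlling adjacent or overlapping chords (the strict convexity of $F$). The disjoint-interval inequality does not literally apply to two chords sharing an endpoint, so the argument must either pass through the continuity of $\mbox{ICC}$ to take limits, or --- as in the mediant computation above --- exploit that the chord over $[b_1,b_2]$ lies strictly between those over $[b_1,b_m]$ and $[b_m,b_2]$. Care is also needed to record the implicit consequence, used throughout, that Assumption~\ref{assume2} forces $\Delta P(\cdot)$ to be strictly increasing.
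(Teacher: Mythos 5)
Your proposal is correct, and it splits cleanly into a half that mirrors the paper and a half that is genuinely different. For single-valuedness, the paper argues exactly along the lines of your parenthetical ``mediant'' remark: it takes two putative maximizers, picks an intermediate bid $b$, derives $v\leq \mbox{ICC}(b,b_1^*)$ and $v\geq \mbox{ICC}(b_2^*,b)$ from optimality, and contradicts Assumption~\ref{assume2}; your reformulation via strict convexity of $F=\Delta C\circ(\Delta P)^{-1}$ is the same idea packaged more conceptually, and you correctly flag the one point the paper glosses over, namely that the two chords being compared share an endpoint while Assumption~\ref{assume2} is stated for strictly disjoint intervals ($b_4>b_3>b_2>b_1$), so one must pass through the continuity of $\mbox{ICC}$ (or an extra interior point) to close the gap --- the same caveat applies to recording that $\Delta P$ must be strictly increasing for $\mbox{ICC}$ to be defined. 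For continuity, however, the paper takes a different and more laborious route: it argues quantitatively via the implicit equation $\mbox{ICC}(b,b^*(v))=v$ and derives an explicit modulus $|b^*(v')-b^*(v)|<\delta/(v-\mbox{ICC}(0,b^*(v)))$, a computation that implicitly requires $\mbox{ICC}$ to extend continuously to the diagonal and the denominator to be positive. Your argument --- that a single-valued selection of the upper-hemicontinuous (closed-graph) correspondence from Lemma~\ref{monotone:comparative:statics} into the compact set $B$ cannot jump once each $b^*(v)$ is a singleton, or equivalently Berge applied to a strictly concave objective --- reaches the same conclusion with less machinery and fewer unstated hypotheses, at the cost of not yielding the paper's explicit local Lipschitz-type bound (which the paper does not subsequently use).
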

\begin{proof}
Under Assumption \ref{assume1} in Lemma \ref{monotone:comparative:statics}
we established the monotonicity and upper-hemicontinuity of the
mapping $b^*(\cdot)$. We now strengthen this result to
monotonicity and continuity.
\par
Suppose that for value $v$, function $v\Delta P(b)-\Delta C(b)$
has an interior maximum and let $B^*$ be its set of maximizers.
First, we show that under Assumption \ref{assume2} $B^*$
is singleton. In fact, suppose that $b_1^*,b_2^* \in B^*$
and wlog $b_1^*>b_2^*$. Suppose that $b \in (b_1^*,b_2^*)$.
First, note that $b$ cannot belong to $B^*$. If it does, then
$$
v\Delta P(b^*_1)-\Delta C(b^*_1)=v\Delta P(b)-\Delta C(b)=v\Delta P(b^*_2)-\Delta C(b^*_2),
$$
and thus $\mbox{ICC}(b,b_1^*)=\mbox{ICC}(b^*_2,b)$
for $b_1^*<b<b^*_2$ which violates Assumption \ref{assume2}.
Second if $b \not\in B^*$, then
$$
v\Delta P(b^*_1)-\Delta C(b^*_1)\geq v\Delta P(b)-\Delta C(b),
$$
and thus $v \leq \mbox{ICC}(b,b_1^*)$. At the same time,
$$
v\Delta P(b^*_2)-\Delta C(b^*_2)\geq v\Delta P(b)-\Delta C(b),
$$
and thus $v \geq \mbox{ICC}(b_2^*,b)$. This is impossible
under Assumption \ref{assume2} since it requires that
$\mbox{ICC}(b_2^*,b)>\mbox{ICC}(b,b_1^*)$. Therefore, this means
that $B^*$ is singleton.
\par
Now consider $v$ and $v'=v+\delta$ for a sufficiently small
$\delta>0$ and $v$ and $v'$ leading to the interior maximum of the
of the objective function. By the result of Lemma \ref{monotone:comparative:statics},
$b^*(v') \geq b^*(v)$ and by Assumption \ref{assume2} the inequality
is strict. \dnedit{Next note that for any $b>b^*(v)$ $\mbox{ICC}(b,b^*(v))>v$
and for any $b<b^*(v)$ $\mbox{ICC}(b,b^*(v))<v$. By continuity, $b^*(v)$
solves $\mbox{ICC}(b,b^*(v))=v$. Now let $b'$ solve $\mbox{ICC}(b',b^*(v))=v'$.
By Assumption \ref{assume2}, $b'>b^*(v')$ since $b^*(v')$ solves $\mbox{ICC}(b,b^*(v'))=v'$
and $b^*(v')>b^*(v)$. 
Then by Assumption \ref{assume2}, $\mbox{ICC}(b',b^*(v))-v>v-\mbox{ICC}(0,b^*(v))$.
This means that $b'-b^*(v)<(v'-v)/(v-\mbox{ICC}(0,b^*(v)))$. This means that 
$|b^*(v')-b^*(v)|<|b'-b^*(v)|<\delta/(v-\mbox{ICC}(0,b^*(v)))$. This means that 
$b^*(v)$ is continuous in $v$.}
%Pick $b$ such that $b^*(v)<b<b^*(v')$, then
%$\mbox{ICC}(b,b^*(v)) \geq v$ and $\mbox{ICC}(b^*(v'),b) \leq v'$.
%Then
%$$
%|\mbox{ICC}(b^*(v'),b)-\mbox{ICC}(b,b^*(v)) | \leq \delta. \quad (*)
%$$
%The continuity of the $\mbox{ICC}(\cdot,\cdot)$ function
%requires (*) to be satisfied for all $(b^*(v'),b)$ and $(b,b^*(v))$
%such that
%$$
%\|(b^*(v'),b)^T-(b,b^*(v))^T\|<\kappa_{\delta},
%$$
%for some $\kappa_{\delta}$ (increasing in $\delta$).
%Then
%$$
%\|(b^*(v'),b)-(b,b^*(v))\|>|b^*(v')-b^*(v)|,
%$$
%which bounds
%$
%|b^*(v')-b^*(v)|<\kappa_{\delta}
%$ 
%for $|v'-v| \leq \delta$ and yields continuity.
\end{proof}

Lemma \ref{continuity}, the interior solutions
maximizing $v\Delta P(b)-\Delta C(b)$ are continuous
in $v$. We proved that each boundary point of $\NR$
corresponds to the maximant of this function. As a result,
whenever the maximum is interior, then the boundary
shares a point with the support hyperplane
$v\Delta P(b^*(v))-\Delta C(b^*(v))=\epsilon$. Therefore,
the corresponding normal vector at that boundary
point is $(P(b^*(v)),-1)$.

\subsection{Support Function Representation}

Our next step would be to use the derived properties of the
boundary of the set $\NR$ to compute it. The basic idea will be based
on varying $v$ and computing the bid  $b^*(v)$ that maximizes regret. The corresponding
maximum value will deliver the value of $\epsilon$ corresponding to the boundary point.
Provided that we are eventually interested in the
properties of the set $\NR$ (and the quality of its approximation
by the data), the characterization via the support hyperplanes will
not be convenient because this characterization requires to solve
the computational problem of finding an envelope function for the
set of support hyperplanes. Since closed convex bounded sets
are fully characterized by their boundaries, we can use
the notion of the {\it support} function to represent the
boundary of the set $\NR$. %\vscomment{why "not"?}
\begin{definition}
The support function of a closed convex set $X$ is defined as:
$$
h(X,u)=\sup_{x \in X}\langle x,u\rangle,
$$
where in our case $X=\NR$ is a subset of ${\mathbb R}^2$ or value and error pairs $(v,\epsilon)$, and  then $u$ is also an element of ${\mathbb R}^2$.
%if $X$ is a subset of the separable metric space $E$, and
%$u$ is contained in the corresponding dual space.
\end{definition}

%In our case, given
%that $\NR$ is a subset of ${\mathbb R}^2$, then $u$ is also an element
%of ${\mathbb R}^2$.
An important property of the support function
is the way it characterizes closed convex bounded sets. Recall that the Hausdorf norm for subsets
$A$ and $B$ of the metric space $E$ with metric $\rho(\cdot,\cdot)$ is defined
as
$$
d_H(A,B)=\max\{\sup\limits_{a \in A}\inf\limits_{b \in B}\rho(a,b),\,
\sup\limits_{b \in B}\inf\limits_{a \in A}\rho(a,b)\}.
$$
It turns out that $d_H(A,B)=\sup_u|h(A,u)-h(B,u)|$.
Therefore, if we find $h(\NR,u)$, this will be equivalent
to characterizing $\NR$ itself.

We note that the set $\NR$ is not bounded: the larger
is the error $\epsilon$ that the player can make, the
wider is the range of values that will rationalize the
observed bids. We consider the restriction of the
set $\NR$ to the set where we bound the error
for the players. Moreover, we explicitly assume
that the values per click of bidders have to
be non-negative.

\begin{assumption}\label{assume3}
\begin{itemize}
\item[(i)] The rationalizable values per click are non-negative: $v \geq 0$.
\item[(ii)]There exists a global upper bound $\bar \epsilon$ for the
error of all players.
\end{itemize}
\end{assumption}
\paragraph{Remark}
While non-negativity of the value may be straightforward
in many auction context, the assumption of an upper
bound for the error may be less straightforward.
\etedit{One way to get such an error bound is if we assume that the values come from a bounded range $[0,M]$ (see Remark after Assumption (\ref{assume1})), and then an error $\epsilon >M$ would correspond to negative utility by the player, and the players may choose to exit from the market
if their advertising results in negative value. }
%This upper bound can be easily justified in the
%context of sponsored search auctions where
%the players make the long-term participation
%decisions. Provided that there is a choice of
%competing search engines to advertize, the
%players may choose to exit from the market
%if their advertising ROI on Microsoft's advertising platform
%is substantially
%different from that on Google.

\begin{theorem}
Under Assumption \ref{assume1}, \ref{assume2} the support function
of $\NR$ is
$$
h(\NR,u)=\left\{
\begin{array}{ll}
+\infty,&\,\mbox{if}\,u_2 \geq 0,\,\mbox{or}\, \frac{u_1}{|u_2|}\not\in \left[\inf_b \Delta P(b),\,\sup_{b}\Delta P(b)\right],\\
|u_2| \Delta C\left(\Delta P^{-1}\left(\frac{u_1}{|u_2|}\right)\right),&\,\mbox{if}\,
u_2<0\,\mbox{and}\,\frac{u_1}{|u_2|}\in \left[\inf_b \Delta P(b),\,\sup_{b}\Delta P(b)\right].\\
\end{array}
\right.
$$
\end{theorem}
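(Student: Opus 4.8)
The plan is to exploit the reformulation in Equation \eqref{eqn:halfplanes}: a pair $(v,\epsilon)$ lies in $\NR$ exactly when $\epsilon \geq f(v)$, where $f(v) = \sup_b\left(v\Delta P(b) - \Delta C(b)\right)$. Thus $\NR$ is the epigraph of $f$, and since $f$ is a pointwise supremum of functions affine in $v$ it is convex; by Assumptions \ref{assume1}--\ref{assume2} and Lemma \ref{continuity} the maximizer $b^*(v)$ is unique and continuous, so by the envelope theorem $f$ is differentiable with $f'(v) = \Delta P(b^*(v))$. I would then unfold $h(\NR,u) = \sup_{(v,\epsilon):\,\epsilon \geq f(v)}\left(u_1 v + u_2 \epsilon\right)$ and split on the sign of $u_2$. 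When $u_2 > 0$ the constraint leaves $\epsilon$ free to grow, so the objective is $+\infty$; when $u_2 = 0$ it collapses to $\sup_v u_1 v$, again $+\infty$ since every $v \in \mathbb{R}$ is feasible. This yields the first branch whenever $u_2 \geq 0$.

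The substantive case is $u_2 < 0$. Writing $u_2 = -|u_2|$, the optimal $\epsilon$ for each fixed $v$ is the smallest feasible one, $\epsilon = f(v)$, so $h(\NR,u) = \sup_v\left(u_1 v - |u_2| f(v)\right)$, a concave maximization. Its first-order condition $u_1 = |u_2| f'(v)$ becomes $\Delta P(b^*(v)) = u_1/|u_2|$. Using Lemma \ref{continuity} and the monotonicity of $\Delta P$, the quantity $\Delta P(b^*(v))$ is continuous and monotone in $v$ and sweeps out the interval $\left[\inf_b \Delta P(b), \sup_b \Delta P(b)\right]$ as $v$ ranges over $\mathbb{R}$. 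If $u_1/|u_2|$ falls outside this interval the concave objective has a derivative of constant sign and the supremum is $+\infty$, matching the first branch; if it falls inside, the optimizer is $b^* = \Delta P^{-1}\left(u_1/|u_2|\right)$. Substituting $f(v^*) = v^*\Delta P(b^*) - \Delta C(b^*)$ gives an objective value $u_1 v^* - |u_2|\left(v^* \Delta P(b^*) - \Delta C(b^*)\right)$, in which the terms linear in $v^*$ cancel precisely because $|u_2|\Delta P(b^*) = u_1$, leaving $|u_2|\,\Delta C\!\left(\Delta P^{-1}\left(u_1/|u_2|\right)\right)$, the second branch.

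The hard part will be the envelope step: rigorously establishing $f'(v) = \Delta P(b^*(v))$ and that this derivative ranges over all of $\left[\inf_b \Delta P(b), \sup_b \Delta P(b)\right]$, since this is what both pins down the first-order condition and legitimizes writing $\Delta P^{-1}$. This is exactly where the uniqueness and continuity of $b^*(v)$ from Lemma \ref{continuity} and the monotonicity of $\Delta P$ enter. Secondary care is needed at the endpoints of the interval, where $b^*(v)$ may tend to a boundary bid and the maximum ceases to be interior, and for the degenerate direction $u = 0$; I would also verify that, when finite, the supremum is attained rather than merely approached.
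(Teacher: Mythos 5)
Your proposal is correct and follows essentially the same route as the paper: split on the sign of $u_2$, and for $u_2<0$ identify the supporting half-plane as the constraint indexed by the bid $b$ satisfying $\Delta P(b)=u_1/|u_2|$, which yields $|u_2|\,\Delta C\left(\Delta P^{-1}\left(u_1/|u_2|\right)\right)$, while the directions with $u_2\geq 0$ or with $u_1/|u_2|$ outside the range of $\Delta P$ are handled, as in the paper, by exhibiting an unbounded direction inside $\NR$. Your packaging via the epigraph of $f(v)=\sup_b\left(v\Delta P(b)-\Delta C(b)\right)$ and its Legendre transform is a cleaner way to organize the same computation, and the issues you flag (differentiability of $f$ via the envelope theorem, attainment at the endpoints of $\left[\inf_b\Delta P(b),\sup_b\Delta P(b)\right]$) are genuine but no more severe than what the paper's own proof leaves informal.
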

\begin{proof}
Provided that the support function is positive homogenenous,
without loss of generality we can set $u=(u_1,u_2)$ with $\|u\|=1$.
To find the support function, we take $u_1$ to be dual to $v_i$
and $u_2$ to be dual to $\epsilon_i$. We re-write the inequality
of the half-plane as: 
$
v_i \cdot \Delta P(b_i) -\epsilon_i\leq \Delta C(b_i)
$.
We need to evaluate the inner-product
$$
u_1v_i+u_2\epsilon_i
$$
from above. We note that to evaluate the support
function for $u_2 \geq 0$, the corresponding inequality
for the half-plane needs to ``flip" to $\geq -\Delta C(b_i)$.
This means that for $u_2 \geq 0$, $h(\NR,u)=+\infty$. Next,
for any $u_2 < 0$ we note that the inequality for the half-plane
can be re-written as:
$
v_i |u_2| \Delta P(b_i) +u_2\epsilon_i\leq |u_2|\Delta C(b_i)$.

As a result if there exists $b_i$ such that for a given $u_1$: 
$
\frac{u_1}{|u_2|}=\Delta P(b_i)$,
then $|u_2|\Delta C(b_i)$ corresponds to the support function for this $b_i$.

Now suppose that $\sup_b\Delta P(b)>0$ and $u_1>|u_2|\sup_b\Delta P(b)$.
In this case we can evaluate
\begin{align*}
u_1v_i+u_2\epsilon_i = (u_1-|u_2|\sup_b\Delta P(b))v_i+|u_2|\sup_b\Delta P(b)v_i
+u_2\epsilon_i.
\end{align*}
Function $|u_2|\sup_b\Delta P(b)v_i
+u_2\epsilon_i$ is bounded by $|u_2|\sup_b\Delta C(b)$ for each $(v_i,\epsilon_i) \in \NR$.
At the same time, function $(u_1-|u_2|\sup_b\Delta P(b))v_i$ is strictly increasing
in $v_i$ on $\NR$. As a result, the support function evaluated at any vector
$u$ with $u_1/|u_2|>\sup_b\Delta P(b)$ is $h(\NR,u)=+\infty$.
\end{proof}

This behavior of the support function can be explained intuitively.
The set $\NR$ is a convex set in $\epsilon>0$ half-space. The
unit-length $u$ corresponds to the normal vector to the boundary
of $\NR$ and $-h(\NR,u)$ is the point of intersection of the $\epsilon$
axis by the tangent line. Asymptotically, the boundaries of the set
will approach to the lines $v_i \sup_b\Delta P(b)-\epsilon \leq \sup_b \Delta C(b)$
and $v_i \inf_b\Delta P(b)-\epsilon \leq \inf_b \Delta C(b)$.
First of all, this means that the projection of $u_2$ coordinate of
of the normal vector of the line that can be tangent to $\NR$ has to be
negative. If that projection is positive, that line can only intersect the
set $\NR$. Second, the maximum slope of the normal vector
to the tangent line is $\sup_b\Delta P(b)$ and the minimum is
$\inf_b\Delta P(b)$. Any line with a steeper slope will intersect the
set $\NR$.

Now we consider the restriction of the set $\NR$ via Assumption \ref{assume3}.
The additional restriction on what is rationalizable  does not change its convexity, but it makes
the resulting set bounded. Denote this bounded subset of $\NR$ by $\NR_B$. The
following theorem characterizes the structure of the support function
of this set.
\begin{theorem}
Let $\bar v$ correspond to the furthest point of the set
$\NR_B$ from the $\epsilon$ axis. Suppose that
$\bar b$ is such that $\bar v \Delta P(\bar b)-\bar\epsilon=\Delta C(\bar b)$.
Also suppose that $\underline{b}$ is the point of intersection of the boundary of the set
$\NR$ with the vertical axis $v=0$.
Under Assumptions \ref{assume1}, \ref{assume2}, and \ref{assume3} the support function
of $\NR_B$ is
$$
h(\NR_B,u)=\left\{
\begin{array}{ll}
h(\NR,u),&\,\mbox{if}\,u_2<0,\,\Delta P(\underline{b})<u_1/|u_2|<\Delta P(\bar b),\\
u_1\bar v +u_2 \bar \epsilon,&\,\mbox{if}\,u_2<0,\,u_1/|u_2|>\Delta P(\bar b),\\
u_2 \underline{\epsilon},&\,\mbox{if}\,u_2<0,\,\Delta P(\underline{b})>u_1/|u_2|,\\
u_1\bar v +u_2 \bar \epsilon,&\,\mbox{if}\,u_1,u_2\geq 0,\\
u_2 \bar \epsilon,&\,\mbox{if}\,u_1<0,\,u_2\geq 0,\\
\end{array}
\right.
$$
\end{theorem}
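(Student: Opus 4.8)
The plan is to compute $h(\NR_B,u)=\sup_{(v,\epsilon)\in\NR_B}(u_1 v+u_2\epsilon)$ directly by a geometric case analysis on the direction $u$, exploiting the fact that $\NR_B$ is the intersection of $\NR$ with the two half-planes $\{v\geq 0\}$ and $\{\epsilon\leq\bar\epsilon\}$ furnished by Assumption \ref{assume3}. First I would pin down the boundary of $\NR_B$. By the earlier boundary lemma, $\partial\NR$ is the curve $\epsilon=g(v):=\sup_b(v\Delta P(b)-\Delta C(b))$, the graph of a convex function whose epigraph (extending to $\epsilon=+\infty$) is $\NR$. Cutting this down by $v\geq 0$ and $\epsilon\leq\bar\epsilon$ yields a bounded convex region whose boundary decomposes into three pieces: a curved lower-right arc, which is a portion of $\partial\NR$ running from the corner $(0,\underline\epsilon)$ where it meets the axis $v=0$ (so $\underline\epsilon=g(0)=-\inf_b\Delta C(b)$, active bid $\underline b$) up to the corner $(\bar v,\bar\epsilon)$ where it meets the top edge (so $g(\bar v)=\bar\epsilon$, active bid $\bar b$); the top edge $\{\epsilon=\bar\epsilon,\ 0\leq v\leq\bar v\}$; and the left edge $\{v=0,\ \underline\epsilon\leq\epsilon\leq\bar\epsilon\}$.

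Next I would use the standard fact that, for a bounded closed convex set, the supremum defining $h(\NR_B,u)$ is attained at the boundary point whose outward normal cone contains $u$, reducing the computation to identifying the exposed face or vertex for each $u$. Along the curved arc, the supporting line at the point with active bid $b$ is $v\Delta P(b)-\epsilon=\Delta C(b)$, with outward normal proportional to $(\Delta P(b),-1)$; after normalizing this gives the slope parameter $u_1/|u_2|=\Delta P(b)$. By Lemma \ref{continuity} the maximizer $b^*(\cdot)$ is continuous and strictly monotone under Assumptions \ref{assume1} and \ref{assume2}, so these normals vary continuously, and the monotonicity of $\Delta P$ makes $u_1/|u_2|$ sweep exactly the interval $[\Delta P(\underline b),\Delta P(\bar b)]$ as $b$ ranges over the arc.

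Then I would dispatch the five cases. For $u_2<0$ with $\Delta P(\underline b)<u_1/|u_2|<\Delta P(\bar b)$, the exposed point lies in the interior of the arc, where neither restriction binds and $\NR_B$ and $\NR$ share the same supporting hyperplane, so $h(\NR_B,u)=h(\NR,u)$ by the previous theorem. For $u_2<0$ with $u_1/|u_2|>\Delta P(\bar b)$ the direction is too steep for any arc normal and the supremum is attained at the corner $(\bar v,\bar\epsilon)$, giving $u_1\bar v+u_2\bar\epsilon$; symmetrically $u_1/|u_2|<\Delta P(\underline b)$ exposes the corner $(0,\underline\epsilon)$, giving $u_2\underline\epsilon$. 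For $u_2\geq 0$ the $\epsilon$-coordinate is maximized on the top edge $\epsilon=\bar\epsilon$, so it remains to maximize $u_1 v$ over $v\in[0,\bar v]$: when $u_1\geq 0$ this selects $(\bar v,\bar\epsilon)$ with value $u_1\bar v+u_2\bar\epsilon$, and when $u_1<0$ it selects $(0,\bar\epsilon)$ with value $u_2\bar\epsilon$. Collecting the five cases reproduces the stated formula.

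The main obstacle I anticipate is showing rigorously that the arc normals fill precisely $[\Delta P(\underline b),\Delta P(\bar b)]$ and that the two transition corners carry exactly the complementary normal cones, with no gaps or overlaps between the arc regime and the vertex regimes. Concretely, I must verify that the normal cone at $(\bar v,\bar\epsilon)$ consists of all directions with $u_2<0$, $u_1/|u_2|\geq\Delta P(\bar b)$ together with the first-quadrant directions $u_1,u_2\geq 0$, and analogously at $(0,\underline\epsilon)$ and $(0,\bar\epsilon)$. This gluing of normal cones at the corners is exactly where the continuity and strict monotonicity of $b^*(\cdot)$ from Lemma \ref{continuity} and the strict inequality in Assumption \ref{assume2} are indispensable, since they guarantee that the slope parameter $\Delta P(b)$ traverses the interval without jumps and that the arc meets each straight edge at a single well-defined vertex.
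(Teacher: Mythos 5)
Your proposal is correct and follows essentially the same route as the paper's own proof: a case analysis on the direction $u$, matching each regime of $u_1/|u_2|$ (for $u_2<0$) to the curved arc of $\partial\NR$ or to one of the corners $(\bar v,\bar\epsilon)$, $(0,\underline\epsilon)$, and handling $u_2\geq 0$ via the top edge $\epsilon=\bar\epsilon$. Your explicit treatment of the normal cones at the corners is a somewhat more careful rendering of the gluing that the paper asserts without elaboration, but it is the same argument.
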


\begin{proof}
We begin with $u_2<0$. In this case when $\Delta P(\underline{b})<u_1/|u_2|<\Delta P(\bar b)$,
then the corresponding normal vector $u$ is on the part of the boundary
of the set $\NR_B$ that coincides with the boundary of $\NR$. And thus
for this set of normal vectors $h(\NR_B,u)=h(\NR,u)$.
\par
Suppose that $u_2<0$ while $u_1/|u_2| \geq \Delta P(\bar b)$. In this
case the support hyperplane is centered at point $(\bar v, \bar \epsilon)$
for the entire range of angles of the normal vectors. Provided that
the equation for each such a support hyperplane corresponds to
$\frac{u_1}{|u_2|}v-\epsilon=c$ and each needs pass through $(\bar v, \bar \epsilon)$,
then the support function is $h(\NR_B,u)=u_1 \bar v+u_2 \bar \epsilon$.
\par
Suppose that $u_2<0$ and $u_1/|u_2|<\Delta P(\underline{b})$. Then
the support hyperplanes will be centered at point $(0,\underline{v})$.
The corresponding support function can be expressed as
$h(\NR_B,u)=u_2 \underline{\epsilon}$.
\par
Now suppose that $u_2 \geq 0$ and $u_1>0$. Then the support hyperplanes
are centered at $(\bar v, \bar \epsilon)$ and again the support function
will be $h(\NR_B,u)=u_1 \bar v+u_2 \bar \epsilon$.
\par
Finally, suppose that $u_2 \geq 0$ and $u_1 \leq 0$. Then the
support hyperplanes are centered at $(0,\bar\epsilon)$. The corresponding
support function is $h(\NR_B,u)=u_2\bar \epsilon$.
\end{proof}

\section{Inference for Rationalizable Set}\label{sec:statistical}
% !TEX root=main.tex

Note that to construct the support function (and thus fully characterize
the set $\NR_B$) we only need to evaluate the function
$\Delta C\left(\Delta P^{-1}\left(\cdot\right)\right)$. It is one-dimensional
function and can be estimated from the data via direct simulation.
We have previously noted that our goal is to characterize the
distance between the true set $\NR_B$ and the set $\widehat{\NR}$
that is obtained from \etedit{subsampling} the data. Denote
$$
f(\cdot)=\Delta C\left(\Delta P^{-1}\left(\cdot\right)\right)
$$
and let $\widehat{f}(\cdot)$ be its estimate from the data.
The set $\NR_B$ is characterized by its support function
$h(\NR_B,u)$ which is determined by the exogenous upper bound
on the error $\bar \epsilon$, the intersection point
of the boundary of $\NR$ with the vertical axis $(0,\underline{\epsilon})$,
and the highest rationalizable value $\bar v$.
\par
We note that the set $\NR$ lies inside the shifted cone defined
by half-spaces $v\sup_b \Delta P(b)-\epsilon \leq \sup_b\Delta C(b)$
and $v\inf_b \Delta P(b)-\epsilon \leq \inf_b\Delta C(b)$. Thus the
value $\bar v$ can be upper bounded by the intersection of the line
$v\sup_b \Delta P(b)-\epsilon = \sup_b\Delta C(b)$ with $\epsilon=\bar\epsilon$.
The support function corresponding to this point can therefore be
upper bounded by $|u_2|\sup_b\Delta C(b)$.
\par
Then we notice that
$$
d_H(\widehat{\NR}_B,\,\NR_B)=\sup\limits_{\|u\|=1}|h(\widehat{\NR}_B,u)-h(\NR_B,u)|.
$$
For the evaluation of the sup-norm of the difference between the
support function, we split the circle $\|u\|=1$ to the areas where the
support function is linear and non-linear determined by the function
$f(\cdot)$. For the non-linear segment the sup-norm can be upper bounded by
the global sup-norm
\begin{align*}
\sup\limits_{u_2<0,\,\Delta P(\underline{b})<u_1/|u_2|<\Delta P(\bar b)}|h(\widehat{\NR}_B,u)-h(\NR_B,u)|
&\leq \sup\limits_{\|u\|=1}\left||u_2|
\widehat{f}\left(\frac{u_1}{|u_2|}\right)-|u_2|
{f}\left(\frac{u_1}{|u_2|}\right)\right|\\
& \leq \sup\limits_{z}\left|
\widehat{f}\left(z\right)-
{f}\left(z\right)\right|.
\end{align*}
For the linear part, provided that the value $\bar \epsilon$ is fixed,
we can evaluate the norm from above by
\begin{align*}
\sup\limits_{\|u\|=1}|u_1\sup_b\Delta \widehat{C}(b)-u_1 \Delta C(b)| &\leq |\sup_b\Delta \widehat{C}(b)- \Delta C(b)|
\leq \sup\limits_{z}\left|
\widehat{f}\left(z\right)-
{f}\left(z\right)\right|.
\end{align*}

Thus
we can evaluate
\begin{equation}\label{bound}
\textstyle{d_H(\widehat{\NR}_B,\,\NR_B) \leq \sup\limits_{z}\left|
\widehat{f}\left(z\right)-
{f}\left(z\right)\right|.}
\end{equation}
Thus, the bounds that can be established for estimation of function
$\widehat{f}$ directly imply the bounds for estimation of the Hausdorff
distance between the estimated and the true sets $\NR_B$.
We assume that a sample of size $N=n \times T$ is available (where $n$ is the
number of auctions sampled per period and $T$ is the number of periods). We
now establish the general rate result for the estimation of the set
$\NR_B$.% from auction data samples.

\begin{theorem}
Suppose that function $f$ has derivative up to order $k \geq 0$ and for some $L \geq 0$
$$
|f^{(k)}(z_1)-f^{(k)}(z_2)| \leq L|z_1-z_2|^{\alpha}.
$$
Under Assumptions \ref{assume1}, \ref{assume2}, and \ref{assume3} we have
$$
d_H(\widehat{\NR}_B,\,\NR_B) \leq O((N^{-1}\log\,N)^{\gamma/(2\gamma+1)}),\;\;\gamma=k+\alpha.
$$
\end{theorem}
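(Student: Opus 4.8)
The plan is to collapse the set-estimation problem to the sup-norm recovery of the single one-dimensional function $f(\cdot)=\Delta C(\Delta P^{-1}(\cdot))$ and then invoke the classical minimax theory of nonparametric regression. The first step is already in hand: inequality \eqref{bound} gives
$$
d_H(\widehat{\NR}_B,\,\NR_B)\leq \sup_z\left|\widehat{f}(z)-f(z)\right|,
$$
so it suffices to bound the right-hand side by $O((N^{-1}\log N)^{\gamma/(2\gamma+1)})$ with $\gamma=k+\alpha$. Thus the whole content of the theorem is a statement about how well the smooth function $f$ can be recovered from $N=n\times T$ noisy samples, and the geometry of $\NR_B$ plays no further role.

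Next I would take $\widehat{f}$ to be a local-polynomial estimator of degree $k$ (or a kernel estimator of matching order) with bandwidth $h$. The key observation is that although the per-auction click and cost functions are step functions of the bid, their batch expectations $\Delta P(\cdot)$ and $\Delta C(\cdot)$ are smooth, and by hypothesis the composite $f$ has $k$ derivatives with $f^{(k)}$ being $\alpha$-H\"older. For such a target the standard bias--variance decomposition yields a deterministic bias of order $h^{\gamma}$ (from a Taylor expansion that uses exactly the $\gamma=k+\alpha$ smoothness) and a stochastic part whose supremum over $z$ is of order $\sqrt{\log N/(Nh)}$. Balancing the two via $h\asymp (N^{-1}\log N)^{1/(2\gamma+1)}$ produces the claimed rate; this is precisely the optimal sup-norm rate (Stone) for a H\"older-$\gamma$ class, and it is what appears in the statement.

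I must then justify that the composition and inversion defining $f$ neither destroy smoothness nor inflate the noise. Here Assumptions \ref{assume1} and \ref{assume2} do the work: monotonicity and continuity of $\Delta P$ make $\Delta P^{-1}$ well defined, while the ``no free clicks'' condition (the $\mathrm{ICC}$ bounded away from zero) keeps the derivative of $\Delta P$ bounded below, so that $\Delta P^{-1}$ is Lipschitz and $f=\Delta C\circ\Delta P^{-1}$ inherits the H\"older-$\gamma$ regularity of the underlying allocation and pricing functions. Propagating the estimation error through the inverse then costs only constant factors, and Assumption \ref{assume3} guarantees that the relevant domain of $z$ is a compact interval, so the supremum is taken over a bounded set.

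The main obstacle is the uniform stochastic control, namely showing that the supremum over $z$ of the centered stochastic term is $O(\sqrt{\log N/(Nh)})$ rather than merely $O(1/\sqrt{Nh})$ pointwise. This is exactly where the logarithmic factor in the rate originates, and it requires an empirical-process argument: cover the compact range of $z$ at the bandwidth scale, apply an exponential concentration inequality (Bernstein or Talagrand) to the smoothed subsampled averages, and take a union bound over the $O(1/h)$-sized net, finally extending from the net to all $z$ by the Lipschitz continuity of the smoothed estimator. Extra care is needed because the $N$ samples are organized as $T$ batches of $n$ draws with batch-specific distributions $D^t$: one has to verify that the across-batch averaging in the definitions of $\Delta P$ and $\Delta C$ preserves the boundedness and independence structure that the concentration step relies on.
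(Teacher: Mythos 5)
Your proposal follows essentially the same route as the paper: reduce the Hausdorff distance to the sup-norm estimation error of the one-dimensional function $f=\Delta C\circ\Delta P^{-1}$ via inequality \eqref{bound}, and then invoke the optimal global (sup-norm) convergence rate $(N^{-1}\log N)^{\gamma/(2\gamma+1)}$ for H\"older-$\gamma$ functions. The only difference is one of detail rather than substance --- the paper simply cites Stone (1982) for this rate, whereas you sketch the underlying local-polynomial bias--variance balancing and the chaining/union-bound argument that produces the $\log N$ factor, and you additionally (and somewhat redundantly, since the theorem hypothesizes smoothness of $f$ directly) verify that the composition and inversion preserve the required regularity.
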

\paragraph{Remark}
The theorem makes a further assumption the function $f$ is $k$ times differentiable, and satisfies a Lipschitz style bound with parameter $L \geq 0$ and exponent $\alpha$. We note that this theorem if we take he special case of $k=0$, the theorem does not require differentiability of functions $\Delta P(\cdot)$ and $\Delta C(\cdot)$. If these functions are Lipschitz, the condition of the theorem is satisfied with $k=0$ and $\alpha=1$, and the theorem provides a $O((N^{-1}\log\,N)^{1/3})$
convergence rate for the estimated set $\NR$.
\begin{proof}
By (\ref{bound}) the error in the estimation of the set $\NR_B$
is fully characterized by the uniform error in the estimation
of the single-dimensional function $f(\cdot)$. In part (i) of the
Theorem our assumption is that we estimate function $f(\cdot)$ from the
class of convex functions. In part (ii) of the Theorem our assumption
is that we estimate the function from the class of $k$ times differentiable
single-dimensional functions. We now use the results for {\it optimal global convergence rates}
for estimation of functions from these respective classes. We note that these rates
do not depend on the particular chosen estimation method and thus provide
a global bound on the convergence of the estimated set $\widehat{\NR}_B$
to the true set.
By \cite{stone1982}, the global uniform convergence rate for estimation of the
unknown function with $k$ derivatives where $k$-th derivative is H\"{o}lder-continuous
with constant $\alpha$ is $(N^{-1}\log\,N)^{\gamma/(2\gamma+1)})$ with $\gamma=k+\alpha$. That delivers
our statement.
\end{proof}

We note that this theorem does not require differentiability of functions
$\Delta P(\cdot)$ and $\Delta C(\cdot)$. For instance, if these functions
are Lipschitz, the theorem provides a $O((N^{-1}\log\,N)^{1/3})$
convergence rate for the estimated set $\NR$.

\section{Data Analysis}\label{sec:data}
% !TEX root=main.tex

We apply our approach to infer the valuations \dnedit{and regret constants} of a set of advertisers from the search Ads system of Bing.% who change their bids frequently.
\dnedit{Our focus is on advertisers who change their bids frequently (up to multiple bid changes per hour)
and thus are more prone to use algorithms for their bid adjustment instead of changing those
bids manually. Each advertiser
corresponds to an ``account". Accounts have multiple listings corresponding to individual ads.
The advertisers can set the bids for each listing within the account separately.}
We examine nine high frequency accounts from the same industry sector for a period of a week and apply our techniques to all the listings within each account. \etnedit{The considered market is highly dynamic where new advertising campaigns are launched on the daily basis (while there is also a substantial amount of experimentation on the auction platform side, that has a smaller contribution to the uncertainty regarding the market over the uncertainty of competitors' actions.)}
We focus on analyzing
\dnedit{the bid dynamics across the }%many
listings within the same account as they are most probably instances of the same learning algorithm. Hence the only thing that should be varying for these listings is the bidding environment and the valuations. Therefore, statistics across such listings, capture in some sense the statistical behavior of the learning algorithm when the environment and the valuation per bid is drawn from some distribution.

\paragraph{Computation of the Empirical Rationalizable Set}
We first start by briefly describing %our process
\dnedit{the procedure that we constructed}
to compute the set $\NR$ for a single listing. We assumed that bids and values have a hard absolute upper bound and since they are constrained to be multiples of pennies, the strategy space of each advertiser is now \vsedit{a finite set,} %finite,
%\dnedit{bounded, }
rather than the set $\R_+$. Thus for each possible deviating bid $b'$ in this \vsedit{finite set}%finite
%\dnedit{bounded}
%space
we compute the $\Delta P(b')$ and $\Delta C(b')$ %of the listing.
\dnedit{for each listing.}
We then discretize the space of additive errors. For each additive error $\epsilon$ in this discrete space we use the characterization in Lemma \ref{lem:upper-lower} to compute the upper and lower bound on the value for this additive error. This involves taking a maximum and a minimum over a finite set of alternative bids $b'$. We then look at the smallest epsilon $\epsilon_0$, where the lower bound is smaller than the upper bound and this corresponds to the smallest rationalizable epsilon. For every epsilon larger than $\epsilon_0$ we plot the upper bound function and the lower bound function.

An example of the resulting set $\NR$ for a high frequency listing of one of the advertisers we analyzed is depicted in Figure \ref{fig:nr-set}. From the figure, \dnedit{we} observe that the right listing \dnedit{has a }%appears to have
higher regret than the left one. Specifically, the smallest rationalizable additive error is further from zero. Upon \dnedit{the } examination of the bid \dnedit{change} pattern, the right listing in the Figure was in a more transient period where the bid of the advertiser was increasing, hence this could have been a period were the advertiser was experimenting. The bid pattern of the first listing was more stable.

\begin{figure}
\centering
\subfigure{
\includegraphics[scale=.30]{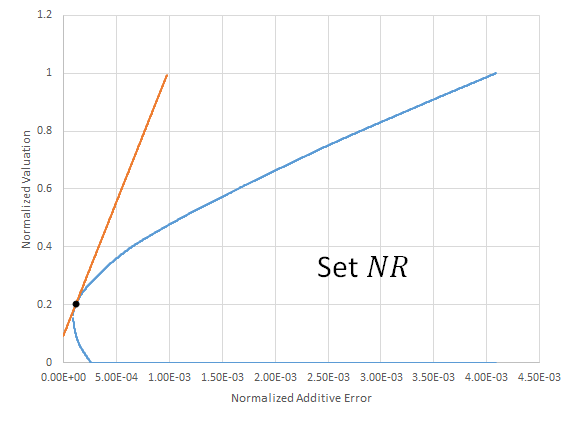}
}
\subfigure{
\includegraphics[scale=.30]{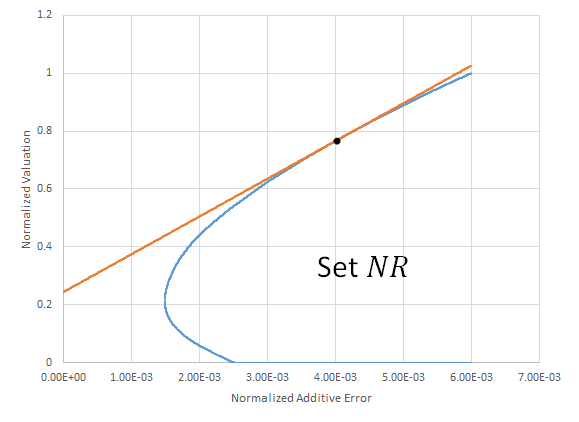}
}
\caption{$\NR$ set for two listings of a high-frequency bid changing advertiser. Values are normalized to $1$. The tangent line selects our  point prediction. }\label{fig:nr-set}
\end{figure}

\paragraph{Point Prediction: Smallest Multiplicative Error}
\label{subsec:multi}
Since the two dimensional rationalizable set $\NR$ is hard to depict as summary statistic from a large number of listings, we instead opt for a point prediction and specifically we compute the point of the $\NR$ set that has the smallest \dnedit{regret }%possible
error.

Since the smallest possible additive error is hard to compare across listings, we instead pick the smallest multiplicative error that is rationalizable, i.e. a pair $(\delta,v)$ such that:
\begin{equation}\label{eqn:eps-mult-regret}
\textstyle{\forall b'\in B_i: \frac{1}{T} \sum_{t=1}^{T} U_i\left(\vec{b}^t;\theta^t, v_i\right) \geq (1-\delta)\frac{1}{T} \sum_{t=1}^{T} U_i\left(b',\vec{b}_{-i}^t;\theta^t, v_i\right)}
\end{equation}
and such that $\delta$ is the smallest possible \dnedit{value} that admits such a value \dnedit{per click} $v$. %If we d
\dnedit{D}enote %with
$P^0 = \frac{1}{T}\sum_{t=1}^{T}P_i^t(\vec{b}^t)$ the observed average probability of click of the advertiser and with $C^0=\frac{1}{T}\sum_{t=1}^{T}C_i^t(\vec{b}^t)$ the observed average cost, then by simple re-arrangements the latter constraint is equivalent to:% saying that:
\begin{equation}\label{eqn:reg-const}
\textstyle{\forall b'\in B_i: v\Delta P(b) \leq \Delta C(b)+ \frac{\delta}{1-\delta}\left( v P^0 - C^0\right)}
\end{equation}
By comparing \dnedit{this result }to Equation \eqref{eqn:halfplanes}, a multiplicative error of $\delta$ corresponds to an additive error of $\epsilon=\frac{\delta}{1-\delta}\left( v P_i^0 - C_i^0\right)$.

Hence, one way to compute the feasible region of values for a multiplicative error %of
$\delta$ from the $\NR$ set that we estimated is to draw a line of $\epsilon=\frac{\delta}{1-\delta}\left( v P_i^0 - C_i^0\right)$. The two points where this line intersects the $\NR$ set correspond to the upper and lower bound on the valuation. Then the smallest multiplicative error $\delta$, corresponds to the line that simply touches the $\NR$ set, i.e. the smallest $\delta$ for which the intersection of the line $\epsilon=\frac{\delta}{1-\delta}\left( v P_i^0 - C_i^0\right)$ is non-empty. This line is depicted in orange in Figure \ref{fig:nr-set} and the point appears with a black mark.

Computationally, we estimate this point by binary search on the values of $\delta\in [0,1]$, until the upper and lower point of $\delta$ in the binary search is smaller than some pre-defined precision. Specifically, for each given $\delta$, the upper and lower bound of the values implied by the constraints in Equation \ref{eqn:reg-const} is:
\begin{equation*}
\max_{b': (1-\delta)\Delta P(b') - \delta P^0>0} \frac{(1-\delta) \Delta C(b') - \delta C^0}{ (1-\delta)\Delta P(b') - \delta P^0} \leq v\leq \min_{b': (1-\delta)\Delta P(b') - \delta P^0>0} \frac{(1-\delta) \Delta C(b') - \delta C^0}{ (1-\delta)\Delta P(b') - \delta P^0}
\end{equation*}
If these inequalities have a non-empty intersection for some value of $\delta$, then they have a non-empty intersection for any larger value of $\delta$ (as is implied by our graphical interpretation in the previous paragraph).

Thus we can do a binary search over the values of $\delta\in (0,1)$. At each iteration we have an upper bound $H$ and lower bound $L$ on $\delta$. We test whether $\delta= (H+L)/2$ gives a non-empty intersection in the above inequalities. If it does, then we decrease the upper bound $H$ to $(H+L)/2$, if it doesn't then we increase the lower bound $L$ to $(H+L)/2$. We stop whenever $H-L$ is smaller than some precision, or when the implied upper and lower bounds on $v$ from the feasible region for $\delta=H$, are smaller than some precision.

The value that corresponds to the smallest rationalizable multiplicative error $\delta$ can be viewed as a point prediction for the value of the player. It is exactly the value that corresponds to the black dot in Figure \ref{fig:nr-set}. Since this is a point of the $\NR$ set the estimation error of this point from data has at least the same estimation error convergence properties as the whole $\NR$ set that we derived in Section \ref{sec:statistical}.

\paragraph{Experimental Findings} We compute the pair of the smallest rationalizable multiplicative error $\delta^*$ and the corresponding predicted value $v^*$ for every listing of each of the nine accounts that we analyzed. In Figure \ref{fig:histogram1}, on the right we plot the distribution of the smallest non-negative rationalizable error over the listings of a single account. \etnedit{Different listings of a single account are driven by the same algorithm, hence we view this plot as the ``statistical footprint'' of this algorithm}.  We observe that all accounts have a similar pattern in terms of the smallest rationalizable error: almost 30\% of the listings within the account can be rationalized with an almost zero error, i.e. $\delta^*<0.0001$. \vsnedit{We note that regret can also be negative, and in the figure we group together all listings with a negative smallest possible error. This contains 30\% of the listings.} The empirical distribution of the regret constant $\delta^*$ for the \etnedit{remaining 70\% } of the listings %is distributed almost uniformly
\dnedit{is close to the uniform distribution on }%in the region
$[0,.4]$. Such a pattern suggests that half of the listings of an advertiser have reached a stable state, while a large fraction of them are still in a learning state.

\begin{figure}
\centering
\subfigure{
\includegraphics[scale=.30]{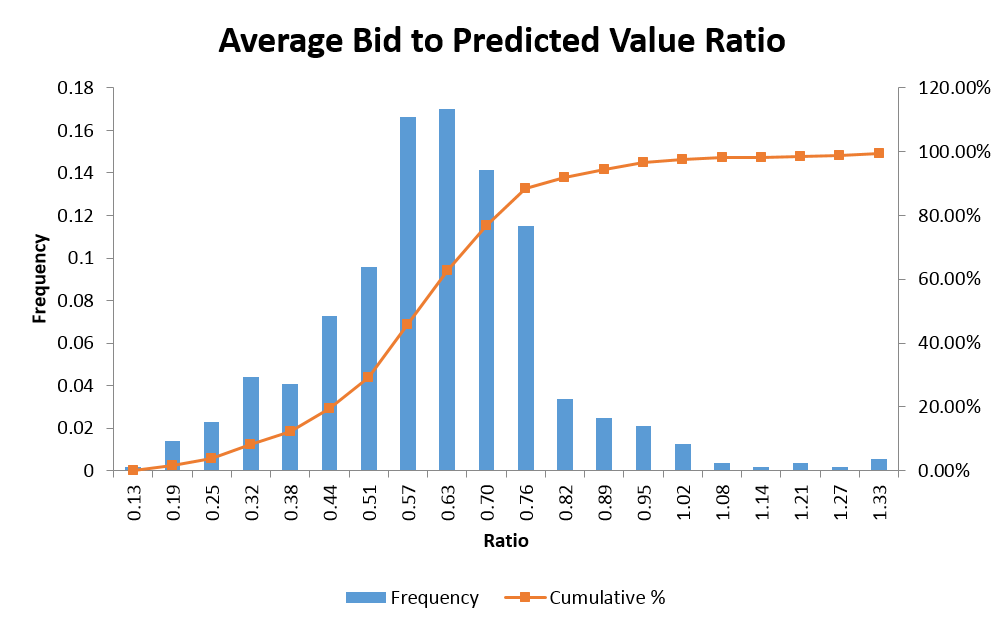}
}
\subfigure{
\includegraphics[scale=.30]{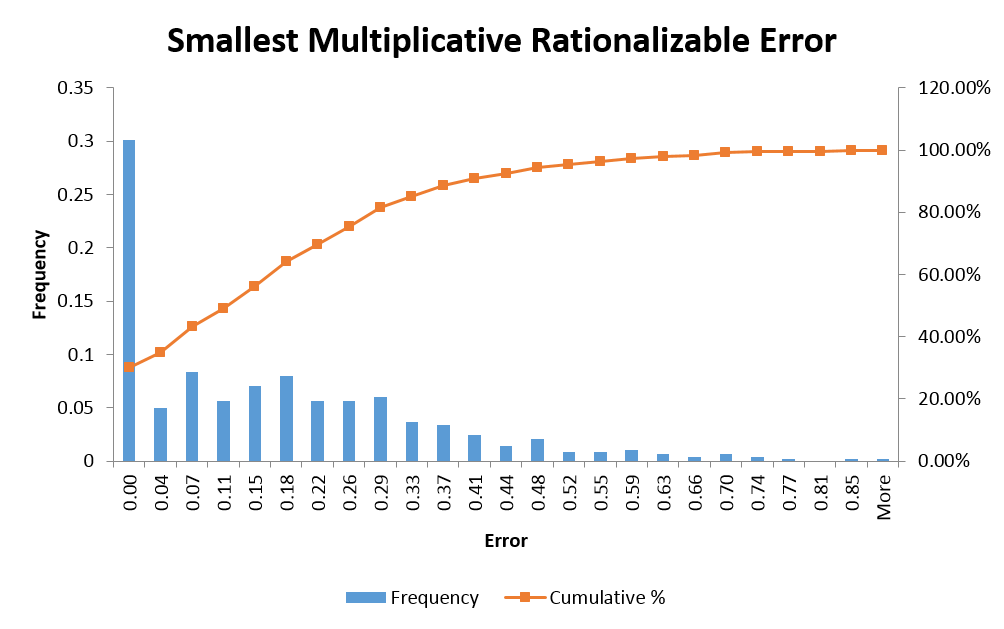}
}
\caption{Histogram of the ratio of predicted value over the average bid in the sequence and the histogram of the smallest \vsnedit{non-negative rationalizable multiplicative error $\delta$ (the smallest bucket contains all listings with a non-positive smallest rationalizable error)}.}\label{fig:histogram1}
\end{figure}

We also analyze \dnedit{the amount of relative bid shading. }
%how much bidders underbid.
\dnedit{It has been previously observed that bidding in the GSP auctions is not truthful.
Thus we can empirically measure the difference between the bids and estimated values
of advertisers associated with different listings. }
Since the bid of a listing is not constant in the period of the week that we analyzed, we use the average bid as proxy for the bid of the advertiser. Then we compute the ratio of the average bid over the predicted value for each listing. We plot the distribution of this ratio over the listings of a \etnedit{typical} account in the left plot of Figure \ref{fig:histogram1}. We observe that based on our prediction, advertisers consistently bid on average around 60\% of their value. Though the amount of bid shading does have some variance,
\dnedit{the empirical distribution of the ratio of average bid and the estimated value is close to normal
distribution }
%it seems to be normally distributed
with mean around 60\%.

\begin{wrapfigure}{l}{0.35\textwidth}
\begin{center}
\includegraphics[scale=.25]{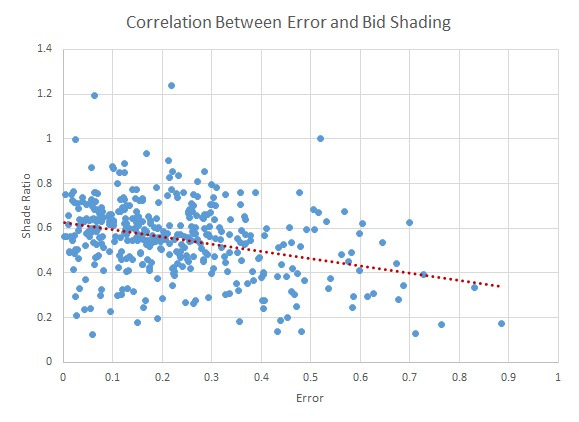}
\end{center}
\caption{Scatter plot of pairs $(v^*,\delta^*)$ for all listings of a single advertiser.}\label{fig:scatter}
\end{wrapfigure}

Interestingly, we observe that based on our prediction, advertisers consistently bid on average around 60\% of their value. Though the amount of bid shading does have some variance,
\dnedit{the empirical distribution of the ratio of average bid and the estimated value is close to normal
distribution }
%it seems to be normally distributed
with mean around 60\%.\footnote{ We do observe that for a very small number of listings within each account our prediction implies a small amount of overbidding on average. These outliers could potentially be because our main assumption that the value remains fixed throughout the period of the week doesn't hold for these listings due to some market shock.}

Figure \ref{fig:histogram1} \etnedit{depicts} the ratio distribution %analysis
for one %of the 
account\etndelete{ that we analyzed}. We %also 
give similar plots for all other accounts that we analyzed in the \etnedit{full version of the paper.}

Last we also analyze whether there is any correlation between the smallest rationalizable error and the amount of underbidding for listings that seem to be in a learning phase (i.e. have $\delta^*$ greater than some tiny amount). We present a scatter plot of the pairs of $(\delta^*,v^*)$ in Figure \ref{fig:scatter} for a single account and for listings that have $\delta^*>0.0001$\etndelete{ (see Figure \ref{fig:scatter2} in the Appendix for the remaining accounts)}. Though there does not seem to be a significant correlation we consistently observe a %slight
\dnedit{small} correlation: listings with higher error shade their bid more.
%
%
%\begin{figure}
%\centering
%\subfigure{
%\includegraphics[scale=.35]{account56970-correlation.png}
%}
%\caption{Scatter plot of pairs $(v^*,\delta^*)$ for all listings of a single advertiser.}\label{fig:scatter}
%\end{figure}

% Acknowledgments
%\begin{acks}
%
%\end{acks}

% Bibliography
\bibliographystyle{acmsmall}
\bibliography{bib-AGT}

\begin{thebibliography}{}

\bibitem[\protect\citeauthoryear{Aradillas-L{\'o}pez, Gandhi, and
  Quint}{Aradillas-L{\'o}pez et~al\mbox{.}}{2013}]{aradillas2013}
{\sc Aradillas-L{\'o}pez, A.}, {\sc Gandhi, A.}, {\sc and} {\sc Quint, D.}
  2013.
\newblock Identification and inference in ascending auctions with correlated
  private values.
\newblock {\em Econometrica\/}~{\em 81,\/}~2, 489--534.

\bibitem[\protect\citeauthoryear{Arora, Hazan, and Kale}{Arora
  et~al\mbox{.}}{2012}]{Arora}
{\sc Arora, S.}, {\sc Hazan, E.}, {\sc and} {\sc Kale, S.} 2012.
\newblock The multiplicative weights update method: a meta-algorithm and some
  applications.
\newblock {\em Theory of Computing\/}.

\bibitem[\protect\citeauthoryear{Athey and Haile}{Athey and
  Haile}{2007}]{athey2007}
{\sc Athey, S.} {\sc and} {\sc Haile, P.~A.} 2007.
\newblock Nonparametric approaches to auctions.
\newblock {\em Handbook of econometrics\/}~{\em 6}, 3847--3965.

\bibitem[\protect\citeauthoryear{Athey and Nekipelov}{Athey and
  Nekipelov}{2010}]{athey:nekipelov:2010}
{\sc Athey, S.} {\sc and} {\sc Nekipelov, D.} 2010.
\newblock A structural model of sponsored search advertising auctions.
\newblock In {\em Sixth ad auctions workshop}.

\bibitem[\protect\citeauthoryear{Bajari, Hong, and Nekipelov}{Bajari
  et~al\mbox{.}}{2013}]{Nekipelov:World:Congress}
{\sc Bajari, P.}, {\sc Hong, H.}, {\sc and} {\sc Nekipelov, D.} 2013.
\newblock Game theory and econometrics: a survey of some recent results.
\newblock {\em Advances in Economics and Econometrics\/}~{\em 3}.

\bibitem[\protect\citeauthoryear{Campo, Guerre, Perrigne, and Vuong}{Campo
  et~al\mbox{.}}{2011}]{campo2011}
{\sc Campo, S.}, {\sc Guerre, E.}, {\sc Perrigne, I.}, {\sc and} {\sc Vuong,
  Q.} 2011.
\newblock Semiparametric estimation of first-price auctions with risk-averse
  bidders.
\newblock {\em The Review of Economic Studies\/}~{\em 78,\/}~1, 112--147.

\bibitem[\protect\citeauthoryear{Caragiannis, Kaklamanis, Kanellopoulos,
  Kyropoulou, Lucier, {Paes~Leme}, and Tardos}{Caragiannis
  et~al\mbox{.}}{2015}]{C++JET13}
{\sc Caragiannis, I.}, {\sc Kaklamanis, C.}, {\sc Kanellopoulos, P.}, {\sc
  Kyropoulou, M.}, {\sc Lucier, B.}, {\sc {Paes~Leme}, R.}, {\sc and} {\sc
  Tardos, E.} 2015.
\newblock On the efficiency of equilibria in generalized second price auctions.
\newblock {\em Journal of Economic Theory\/}.

\bibitem[\protect\citeauthoryear{Freund and Schapire}{Freund and
  Schapire}{1999}]{fre}
{\sc Freund, Y.} {\sc and} {\sc Schapire, R.} 1999.
\newblock Adaptive game playing using multiplicative weights.
\newblock {\em Games and Economic Behavior\/}~{\em 29}, 79--103.

\bibitem[\protect\citeauthoryear{Guerre, Perrigne, and Vuong}{Guerre
  et~al\mbox{.}}{2000}]{guerre2000}
{\sc Guerre, E.}, {\sc Perrigne, I.}, {\sc and} {\sc Vuong, Q.} 2000.
\newblock Optimal nonparametric estimation of first-price auctions.
\newblock {\em Econometrica\/}~{\em 68,\/}~3, 525--574.

\bibitem[\protect\citeauthoryear{Guerre, Perrigne, and Vuong}{Guerre
  et~al\mbox{.}}{2009}]{guerre2009}
{\sc Guerre, E.}, {\sc Perrigne, I.}, {\sc and} {\sc Vuong, Q.} 2009.
\newblock Nonparametric identification of risk aversion in first-price auctions
  under exclusion restrictions.
\newblock {\em Econometrica\/}~{\em 77,\/}~4, 1193--1227.

\bibitem[\protect\citeauthoryear{Haile and Tamer}{Haile and
  Tamer}{2003}]{haile2003}
{\sc Haile, P.~A.} {\sc and} {\sc Tamer, E.} 2003.
\newblock Inference with an incomplete model of english auctions.
\newblock {\em Journal of Political Economy\/}~{\em 111,\/}~1, 1--51.

\bibitem[\protect\citeauthoryear{Jofre-Bonet and Pesendorfer}{Jofre-Bonet and
  Pesendorfer}{2003}]{jofre2003}
{\sc Jofre-Bonet, M.} {\sc and} {\sc Pesendorfer, M.} 2003.
\newblock Estimation of a dynamic auction game.
\newblock {\em Econometrica\/}~{\em 71,\/}~5, 1443--1489.

\bibitem[\protect\citeauthoryear{Littlestone and Warmuth}{Littlestone and
  Warmuth}{1994}]{LittWarm94}
{\sc Littlestone, N.} {\sc and} {\sc Warmuth, M.} 1994.
\newblock The weighted majority algorithm.
\newblock {\em Information and Computation\/}~{\em 108}, 212--260.

\bibitem[\protect\citeauthoryear{Roughgarden}{Roughgarden}{2012}]{Roughgarden2012}
{\sc Roughgarden, T.} 2012.
\newblock The price of anarchy in games of incomplete information.
\newblock {\em SIGecom Exch.\/}~{\em 11,\/}~1, 18--20.

\bibitem[\protect\citeauthoryear{S.~Hart and Mas-Colell}{S.~Hart and
  Mas-Colell}{2000}]{HartMasColell}
{\sc S.~Hart, S.} {\sc and} {\sc Mas-Colell, A.} 2000.
\newblock A simple adaptive procedure leading to correlated equilibrium.
\newblock ~{\em 1127-1150}, 68.

\bibitem[\protect\citeauthoryear{Stone}{Stone}{1982}]{stone1982}
{\sc Stone, C.~J.} 1982.
\newblock Optimal global rates of convergence for nonparametric regression.
\newblock {\em The Annals of Statistics\/}, 1040--1053.

\bibitem[\protect\citeauthoryear{Syrgkanis and Tardos}{Syrgkanis and
  Tardos}{2013}]{Syrgkanis2013}
{\sc Syrgkanis, V.} {\sc and} {\sc Tardos, E.} 2013.
\newblock Composable and efficient mechanisms.
\newblock In {\em Proceedings of the Forty-fifth Annual ACM Symposium on Theory
  of Computing}. STOC '13. ACM, New York, NY, USA, 211--220.

\bibitem[\protect\citeauthoryear{Topkis}{Topkis}{1998}]{topkis}
{\sc Topkis, D.~M.} 1998.
\newblock {\em Supermodularity and complementarity}.
\newblock Princeton university press.

\bibitem[\protect\citeauthoryear{Varian}{Varian}{2007}]{Varian07}
{\sc Varian, H.~R.} 2007.
\newblock Position auctions.
\newblock {\em International Journal of Industrial Organization\/}~{\em
  25,\/}~6, 1163--1178.

\bibitem[\protect\citeauthoryear{Vives}{Vives}{2001}]{vives}
{\sc Vives, X.} 2001.
\newblock {\em Oligopoly pricing: old ideas and new tools}.
\newblock MIT press.

\bibitem[\protect\citeauthoryear{{Xin Jiang} and {Leyton-Brown}}{{Xin Jiang}
  and {Leyton-Brown}}{2007}]{JiangLeyton-Brown}
{\sc {Xin Jiang}, A.} {\sc and} {\sc {Leyton-Brown}, K.} 2007.
\newblock Bidding agents for online auctions with hidden bids.
\newblock {\em Machine Learning\/}~{\em 67}.

\end{thebibliography}

%\newpage

%% Appendix
%\appendix
%
%\section{Omitted Figures}
%\input{data.tex}

\end{document}